\DeclareMathOperator{\card}{card}
\DeclareMathOperator{\alf}{alph}
\newcommand{\PAL}{\mathit{PAL}}
\newcommand{\PER}{\mathop{\textit{PER}}\nolimits}
\newcommand{\Stand}{\mathop{\textit{Stand}}\nolimits}
\newcommand{\fns}{\footnotesize}
\newtheorem{thm}{Theorem}[section]
\newtheorem{prop}[thm]{Proposition}
\newtheorem{cor}[thm]{Corollary}
\newtheorem{lemma}[thm]{Lemma}
\newtheorem{example}[thm]{Example}
\title{Some extremal properties of the Fibonacci word}
\author{Aldo de Luca
\medskip\\
\fns Dipartimento di Matematica e Applicazioni ``R.~Caccioppoli''\\
\fns Universit\`a degli Studi di Napoli Federico II\\
\fns Via Cintia, Monte S.~Angelo, I-80126 Napoli, Italy}
\begin{document}
\maketitle
\begin{abstract} We prove that the Fibonacci word $f$ satisfies among all characteristic Sturmian words, three interesting extremal properties.  The first  concerns  the length and the second the minimal period of its palindromic prefixes. Each of these two properties characterizes $f$ up to a renaming of its letters. A third property concerns the number of occurrences of the letter $b$ in its palindromic prefixes. It characterizes uniquely $f$ among all characteristic Sturmian words having the prefix $abaa$.

\vspace{2 mm}

\noindent
{\em Keywords}. Fibonacci word, Sturmian words, Characteristic words, Central words, Standard words, Christoffel words, Continuants
\end{abstract}

\section{Introduction}

Words are finite or infinite sequences of elements, called letters, taken from a finite set called alphabet. In the combinatorics of infinite words 
 the Fibonacci  word is very famous  since it satisfies a great number of  beautiful  properties which are of a paramount interest
both from the theoretical and the applicative point of view.

As is well known, the Fibonacci word $f$ can be defined in several different ways. For instance,
$f$ is the fixed point $\varphi^{\omega}(a)$ of the Fibonacci morphism $\varphi: \{a, b\}^*\rightarrow  \{a, b\}^*$ defined by $\varphi(a)= ab$ and $ \varphi(b)=a$. The name Fibonacci given to $f$ is due to the fact that  $f$ is the limit sequence of  the infinite sequence $(f_n)_{n\geq -1}$ of finite words recursively defined as 
$$f_{-1}=b,  f_0=a, \ \mbox{and} \  f_{n+1}= f_{n}f_{n-1} \  \mbox{for} \ \  n\geq 0.$$
 For any $n\geq -1$ one has
$|f_n| = F_n$ where $(F_n)_{n\geq -1}$ is the Fibonacci numerical sequence:
$$ F_{-1} = F_0 = 1 \ \mbox{and} \ \  F_{n+1} = F_n + F_{n-1} \  \mbox{for} \  n\geq 0.$$

 The Fibonacci word is a paradigmatic example of Sturmian word. As is well known, Sturmian words are infinite words  over a binary alphabet of great interest in combinatorics on words for the many applications in Algebra, Number theory, Physics, and Computer Science. 
 
 Several different but equivalent definitions of Sturmian words exist (see, for instance,  \cite [Chap.~2]{LO2}).
  A  Sturmian word can be defined in a purely combinatorial way as  an infinite sequence of letters such that for any integer $n\geq 0$, the number of its distinct factors of length $n$ is $n+1$. This is equivalent to say that an infinite word is Sturmian if and only if it is aperiodic and for any $n$ it has the minimal possible of distinct factors of length $n$.

 A geometrical  definition is the following: a  Sturmian word is an infinite word  associated to  the sequence of the cuts (cutting sequence) in a squared-lattice made by a semi-line having a slope which is an irrational number. A horizontal cut is denoted by the letter $b$, a vertical cut by $a$ and a cut with a corner by $ab$ or $ba$. Sturmian words represented by a semi-line starting from the origin are usually called characteristic, or  standard. For any Sturmian word there exists a characteristic Sturmian word having the same set of factors. The Fibonacci word is the characteristic Sturmian word having a slope equal to the golden ratio $g= \frac{\sqrt 5 -1}{2}$.
 
 In many cases, the Fibonacci word $f$ satisfies among all infinite words of a given class, some extremal properties in the sense that   some quantity is maximal or minimal for $f$ (see, for instance \cite{CdL0, CdL,adL81,MRS}, and the overview \cite{JC}). A special case
 of great interest is when the class of infinite words is formed by all characteristic Sturmian words and the extremal property is satisfied only  by the Fibonacci word $f$ and by $E(f)$, where $E$ is the automorphism of $\{a, b\}^*$  interchanging the letter $a$ with the letter $b$. In this way one obtains a characterization of $f$, up to a renaming of the letters, inside the class of characteristic Sturmian words.
 
 Some of these latter extremal properties are strictly related to
a simple construction of characteristic Sturmian words, due to the author \cite{deluca}. It is based on an operator definable in any free monoid $A^*$ and called right-palindromic closure,
which associates to each word $w\in A^*$ the shortest palindrome of $A^*$ having $w$ as a prefix. Any given word $v\in A^*$  can suitably `direct' subsequent  iterations of  the preceding operator  according to the sequence of letters  in $v$ as follows: at each step, one concatenates the next letter of $v$ to the right of the already constructed palindrome and then takes the right palindromic closure. Thus starting with any directive word $v$ one generates a palindrome $\psi(v)$. The map $\psi$,  called palindromization map, is injective; the word $v$ is called the directive word of $\psi(v)$. 

Since for any
$u,v\in A^*$, $\psi(uv)$ has $\psi(u)$ as a prefix, one can extend the map $\psi$ to right infinite words
$x\in A^{\omega}$ producing an infinite word $\psi(x)$. It has been proved in \cite{deluca} that in the case  of a binary alphabet ${\cal A } = \{a, b\}$  if each letter of ${\cal A }$ occurs infinitely often in $x$, then one can generate all characteristic Sturmian words\footnote{The palindromization map $\psi$ has been extended to   infinite words over an arbitrary alphabet $A$ by X. Droubay, J. Justin, and G. Pirillo in  \cite{DJP}, where  the family  of {\em  standard episturmian words} over $A$ has been introduced. Some further extensions and generalizations of $\psi$ are in \cite{adlADL, adlADL1}. An extension of $\psi$ to free group $F_2$ was given by C. Reutenauer  in \cite{KREU}.}. Moreover, $\psi({\cal A}^*)$ coincides with the set of the palindromic prefixes of all characteristic Sturmian words. These words can be also defined in a purely combinatorial way by an extremal property  closely related to Fine and Wilf's periodicity theorem \cite{FW}; they are usually c!
 alled also central words since they play a central role in Sturmian words theory. In Section~\ref{sec:three} some remarkable structural properties of central words relating them to finite standard words and to Christoffel words are briefly presented.

A  central word is of order $n$ if its directive word is of length $n$.
In \cite{CdL} we proved that the Fibonacci word $f= \psi((ab)^{\omega})$ is the only characteristic Sturmian word, up to a renaming of the letters,  whose palindromic prefixes $w$ of any order  are harmonic, that is  the minimal period $\pi(w)$ of $w$ satisfies the condition  $\pi^2(w) \equiv \pm 1$ (mod $|w|+2)$.

The main results of the paper are three theorems (cf.~Theorems \ref{thm:main}, \ref{thm:pmain0}, and \ref{thm: Stuca}), somehow related to each other, showing the following  extremal properties of $f$. Theorem \ref{thm:main} states that  a characteristic Sturmian word $s$ has the palindromic prefixes of any order of maximal length if and only if  $s=f$ or $s= E(f)$, where $E$ is the automorphism of $\{a,b\}^*$
interchanging the letter $a$ with $b$.  Similarly, Theorem~\ref{thm:pmain0} states that, up to a renaming 
of the letters, the Fibonacci word is the only characteristic Sturmian word whose palindromic prefixes of any order have a  maximum value of the  minimal period. Theorem~\ref{thm: Stuca} shows that a characteristic Sturmian word beginning with the letter $a$  has the palindromic prefixes of any order  with the maximal number of occurrences of the letter $b$  if and only if $s = f$ or $s$ has the directive word 
$(ab^2)(ab)^{\omega}$. Hence, this extremal property  characterizes uniquely $f$ among all characteristic Sturmian words having the prefix $abaa$.

The proof of these  theorems is given in Section \ref{sec:four} by using techniques of combinatorics on words and three extremal properties  of central words which are prefixes of the Fibonacci word concerning their length (cf.~Theorem \ref{thm:main00}),
their minimal period (cf.~Theorem \ref{thm:pmain}), and the number of occurrences of the letter $b$
(cf.~Theorem \ref{thm:comp000}).

In Section \ref{sec:five} we consider the arithmetization of Sturmian words theory obtained by representing the directive words of central words, as well as  of characteristic Sturmian words, by sequences of integers (integral representations). In this setting continued fractions and continuants  associated to these numerical sequences play a relevant role.  We show that Theorem~\ref{thm:main00} is equivalent to a  property of continuants (cf.~Theorem \ref{thm:cf0101}) and a direct  proof of this latter result  is also given. Moreover, we show that also  Theorem~\ref{thm:pmain} can be derived from
 Theorem ~\ref{thm:cf0101} by using a suitable expression of the minimal periods of  central words in terms of  continuants.

\section{Preliminaries}

\subsection{Notation and preliminary definitions}

In the following ${\cal A}$  will denote a binary alphabet ${\cal A}=\{a,b\}$  and ${\cal A}^*$  the \emph{free monoid} generated by ${\cal A}$. 
The elements $a$ and $b$ of ${\cal A}$ are usually called \emph{ letters} and those of ${\cal A}^*$ \emph {words}. 
We suppose that  ${\cal A}$ is totally ordered by setting  $a<b$.
 The identity element
of ${\cal A}^*$ is called \emph{empty word} and denoted by $\varepsilon$.  We
set ${\cal A}^+={\cal A}^*\setminus\{\varepsilon\}$.

A word $w\in {\cal A}^+$ can be written uniquely as a sequence of letters 
$w=w_1w_2\cdots w_n$, with $w_i\in {\cal A}$, $1\leq i\leq n$, $n>0$.  The
integer $n$ is called the \emph{length} of $w$ and denoted $|w|$.  The
length of $\varepsilon$ is taken equal to $0$.  For any $w\in {\cal A}^*$ and $x\in {\cal A}$, $|w|_x$
denotes the number of occurrences of the letter $x$ in $w$.
 For any $w\in {\cal A}^*$, $\alf w $ will denote the set of all distinct letters of ${\cal A}$ occurring in $w$.
 
 We consider the map  $\eta: {\cal A}^* \rightarrow {\mathbb Q} \cup \{\infty\}$ defined by $$\eta(\varepsilon)= 1 \ \ \mbox {and} \ \  \eta(w) = \frac{|w|_b}{|w|_a}\ \ \mbox{for} \ \  w\neq \varepsilon.$$ If $|w|_a=0$ and $w\neq \varepsilon$, we assume   $\eta(w) =\frac{|w|_b}{0}= \infty$. For any
$w\in {\cal A}^*$, $\eta(w)$ is called the {\em slope} of $w$.

Let $w\in {\cal A}^*$.  The word $u$ is a \emph{factor} of $w$ if there exist
words $r$ and $s$ such that $w=rus$.  A factor $u$ of $w$ is called
\emph{proper} if $u\neq w$.  If $w=us$, for some word $s$ (resp.,
$w=ru$, for some word $r$), then $u$ is called a \emph{prefix} (resp.,
a \emph{suffix}) of $w$.

   Let  $p$ be a positive integer.  A  word $w=w_1\cdots w_n$, $w_i\in {\cal A}$,
$1\leq i\leq n$, has {\em period} $p$  if the following condition is
satisfied: for any integers $i$ and $j$ such that $1\leq i,j\leq n$,
\[
	\mbox{if }i\equiv j \pmod{p} , \mbox{ then } w_i = w_j.
\]
Let us observe that if a word $w$ has a period $p$, then any non-empty  factor of $w$ has also the period $p$. 
We shall denote by $\pi(w)$ the minimal period of $w$. Conventionally, we set $\pi(\varepsilon)=1$.

We recall the following important periodicity  theorem due to Fine and Wilf \cite{FW}:  If a word $w$ has two periods $p$ and $q$  and $|w|\geq p+q -\gcd(p,q)$, then $w$ admits the period $\gcd(p,q)$.

Let $w=w_1\cdots w_n$, $w_i\in {\cal A}$, $1\leq i\leq n$.  The
\emph{reversal}, or {\em mirror image}, of $w$ is the word $w^{\sim}= w_n\cdots w_1$.  One
defines also $\varepsilon^{\sim}=\varepsilon$.  A word is called
\emph{palindrome} if it is equal to its reversal.  We shall denote by
$\PAL$ the set of all palindromes on the
alphabet ${\cal A}$.

A  right-infinite word $x$, or simply \emph{infinite word}, over the alphabet ${\cal A}$   is just an infinite sequence of letters:
$$x=x_1x_2\cdots x_n\cdots \text{ where }x_i\in {\cal A},\,\text{ for all } i\geq 1\enspace.$$
For any integer $n\geq 0$, $x_{[n]}$ will denote the prefix $x_1x_2\cdots x_n$ of $x$ of length $n$.
A factor of $x$ is either the empty word or any sequence  $x_i\cdots x_j$ with $i\leq j$. The set of all infinite words over ${\cal A}$ is denoted by ${\cal A}^{\omega}$.

For all definitions and notation concerning words not explicitly given in the paper, the reader is referred to the book of  M. Lothaire \cite{LO}; for Sturmian words see \cite[Chap.~2]{LO2}.

\subsection{The palindromization map}\label{sec:twotwo}
 We  introduce in
${\cal A}^*$ the operator $^{(+)} : {\cal A}^*\rightarrow \PAL$ which
maps any word $w\in {\cal A}^*$ into the palindrome $w^{(+)}$
defined as the shortest palindrome having the prefix $w$ (cf.~\cite{deluca}).  We call
$w^{(+)}$ the \emph{right palindromic closure of} $w$.  If $Q$ is the
longest palindromic suffix of $w= uQ$, then one has
\[
	w^{(+)}=uQu^{\sim}\,.
\]
Let us now define the map
\[
	\psi: {\cal A}^*\rightarrow \PAL ,
\]
called {\em right iterated palindromic closure}, or simply {\em palindromization map},  as follows: $\psi(\varepsilon)=\varepsilon $ and for all
$v\in {\cal A}^*$, $x\in {\cal A} $,
\[
	\psi(vx)=(\psi(v)x)^{(+)}\,.
\]

\vspace{2 mm}

\begin{example}\label{ex:pm} {\em Let $v=ab^2a$. One has $\psi(a)=a$, $\psi(ab)= (ab)^{(+)}=aba$,
$\psi(ab^2)= ababa$, and $\psi(v)=(ababaa)^{(+)}= ababaababa$.}
\end{example}

The following proposition summarizes some  noteworthy  properties of  the palindromization map
 (cf., for instance \cite{deluca,DJP}):
\begin{prop}\label{prop:basicp} The palindromization map $\psi$  satisfies the following
properties:
\begin{itemize}
\item[P1.] The   palindromization map  is  injective.
\item[P2.]  If $u$ is  a prefix of  $v$, then $\psi(u)$ is a palindromic prefix (and suffix) of $\psi(v)$.
\item[P3.] If $p$ is a prefix of $\psi(v)$, then $p^{(+)}$ is a prefix of $\psi(v)$.
\item[P4.] Every palindromic prefix of $\psi(v)$ is of the form $\psi(u)$ for some prefix $u$ of $v$.
\item[P5.] The palindromization map $\psi$ commute with  the automorphism $E$ of  ${\cal A}^*$ defined by $E(a)=b$ and
$E(b)=a$, i.e.,
$ \psi\circ E = E\circ \psi.$
\item[P6.] For every $v\in {\cal A}^*$, $|\psi(v)|= |\psi(v^{\sim})|$.
\end{itemize}
\end{prop}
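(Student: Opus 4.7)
The plan is to order the six claims so that the easier ones bootstrap the harder ones: P5 and P2 are straightforward inductions; P3 and P4 are proved jointly by induction and yield P1 as a corollary; P6 is the real obstacle. For P5, I would observe that $E$ is a length-preserving monoid automorphism sending palindromes to palindromes, hence it commutes with right palindromic closure, $E(w^{(+)})=(E(w))^{(+)}$; a one-line induction on $|v|$ then gives $\psi\circ E = E\circ \psi$. For P2, I would induct on $|v|-|u|$: the base case $u=v$ is trivial, and in the step $v=v'x$ the inductive hypothesis makes $\psi(u)$ a palindromic prefix of $\psi(v')$, while $\psi(v)=(\psi(v')x)^{(+)}$ contains $\psi(v')$ as a prefix by construction; any palindromic prefix of the palindrome $\psi(v)$ is automatically a suffix by reversing.

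For P3 and P4, I would carry out a joint induction on $|v|$. The base $v=\varepsilon$ is trivial. In the step $v=v'x$, write $\psi(v')x = uQ$ with $Q$ the longest palindromic suffix of $\psi(v')x$, so that $\psi(v) = uQu^{\sim}$. For P4, given a palindromic prefix $p$ of $\psi(v)$, I would argue that $p$ is either $\psi(v)$ itself or already a palindromic prefix of $\psi(v')$, so that the inductive hypothesis applies; the key point is that if $p$ strictly contains $\psi(v')x$ as a prefix but is shorter than $\psi(v)$, the corresponding palindromic factor inside $\psi(v')x$ would be strictly longer than $Q$, contradicting the maximality of $Q$. For P3, given any prefix $p$ of $\psi(v)$, factoring $p=u'Q'$ with $Q'$ its longest palindromic suffix, the palindromicity of $\psi(v)$ together with P2 places $u'^{\sim}$ as a suffix of $\psi(v)$ in the right position, embedding $p^{(+)}=u'Q'u'^{\sim}$ inside $\psi(v)$. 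P1 is then immediate from P4: from $\psi(v_1)=\psi(v_2)$ one reads off the unique longest proper palindromic prefix and the letter immediately following it, which by P4 must be $\psi(v_i')$ and $x_i$ for the decompositions $v_i = v_i'x_i$, giving $v_1=v_2$ by induction on length.

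The main obstacle is P6. I would approach it via a closed form for $|\psi(v)|$ in terms of the integer sequence encoding $v$ as $v=x_1^{n_1}x_2^{n_2}\cdots x_k^{n_k}$ with alternating letters $x_i\in\{a,b\}$, proving that $|\psi(v)|+2$ equals a continuant of the $n_i$'s; this matches the arithmetization via continuants promised in Section~\ref{sec:five}. The desired equality $|\psi(v)|=|\psi(v^{\sim})|$ then reduces to the classical symmetry of continuants under reversal of their arguments. The difficulty is that this closed form does not drop directly out of the recursion $\psi(vx)=(\psi(v)x)^{(+)}$: one must control the length of the longest palindromic suffix $Q$ of $\psi(v)x$, since $|\psi(vx)|=2|\psi(v)|+2-|Q|$, and only through a separate combinatorial analysis of how $Q$ depends on the tail of $v$ does the continuant recursion emerge.
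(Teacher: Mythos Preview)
The paper does not actually prove Proposition~\ref{prop:basicp}: it is stated with the remark ``(cf., for instance \cite{deluca,DJP})'' and no argument is given. So there is no in-paper proof to compare your proposal against. That said, your outline is essentially sound and matches the standard treatment in the cited references, with two remarks.

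First, your handling of P3 is more complicated than necessary, and the sketch you give (``the palindromicity of $\psi(v)$ together with P2 places $u'^{\sim}$ as a suffix of $\psi(v)$ in the right position'') does not quite close the gap: knowing that $u'Q'$ is a prefix and $Q'(u')^{\sim}$ is a suffix of the palindrome $\psi(v)$ does not by itself force $u'Q'(u')^{\sim}$ to occur as a \emph{prefix}. The clean argument is simply that $\psi(v)$ is itself a palindrome having $p$ as a prefix, so by the minimality defining $p^{(+)}$ the word $p^{(+)}$ is a prefix of $\psi(v)$; no induction or joint treatment with P4 is needed. Your inductive argument for P4 is correct: the key step, that a palindromic prefix strictly longer than $\psi(v')$ must equal $\psi(v)$, follows exactly from the minimality of right palindromic closure as you indicate.

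Second, your plan for P6 via continuants is valid and in fact coincides with the machinery the paper develops later in Section~\ref{sec:five}: Theorem~\ref{thm:cf000} gives $|a\psi(v)b|=K[\alpha_0+1,\alpha_1,\ldots,\alpha_{n-1},\alpha_n+1]$, and together with the reversal symmetry~(\ref{eq:reverse}) and the identity~(\ref{eq:piuno}) this yields $|\psi(v)|=|\psi(v^{\sim})|$. A shorter self-contained route, if you want to avoid importing Section~\ref{sec:five}, is via the Parikh matrices of the morphisms $\mu_a,\mu_b$: since $|\psi(v)|+2=|\mu_v(ab)|$ equals the sum of all entries of the product $M_{x_1}\cdots M_{x_n}$ (with $M_a=\left(\begin{smallmatrix}1&1\\0&1\end{smallmatrix}\right)$, $M_b=M_a^{T}$), transposing the product shows $|\psi(v)|=|\psi(E(v^{\sim}))|$, and P5 finishes.
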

 For any  $w\in \psi({\cal A}^*)$ the unique word  $v$ such that  $\psi(v)=w$ is called the \emph{directive word} of $w$.
  The directive word $v$ of $w=\psi(v)$ can be read from $w$
just by taking the subsequence of $w$ formed by all letters immediately following all proper palindromic
prefixes of $w$.

For any $x\in {\cal A}$ let  $\mu_x$ denote  the injective endomorphism of ${\cal A}^*$
$$\mu_x: {\cal A}^*\rightarrow {\cal A}^*$$ 
defined by
\begin{equation}\label{eq:endo}
\mu_x(x)=x, \ \ \mu_x(y)= xy, \, \, \mbox{for}  \, \,  y\in {\cal A}\setminus \{x\} .
\end{equation}
If $v=x_1x_2\cdots x_n$, with $x_i\in {\cal A}$, $i=1,\ldots, n$, then we set:
$$ \mu_v=\mu_{x_1}\circ \cdots \circ \mu_{x_n}; $$
moreover, if $v=\varepsilon$,  $\mu_{\varepsilon}$= id.

The following interesting theorem, proved by J. Justin \cite{J} in the case of an arbitrary alphabet, relates the palindromization map to
morphisms $\mu_v$.

\begin{thm}\label{thm:J} For all  $ v,u\in {\cal A}^*$,
$$ \psi(vu) = \mu_v(\psi(u))\psi(v).$$
In particular, if $x\in{\cal A}$, one has:
$$\psi(xu)= \mu_x(\psi(u))x \ \ \mbox{and} \ \ \psi(vx)=\mu_v(x)\psi(v).$$
\end{thm}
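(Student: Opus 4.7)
The plan is to establish Justin's identity $\psi(vu)=\mu_v(\psi(u))\psi(v)$ by induction on $|v|$, with the real content lying in the single-letter identity
$$\psi(xw)=\mu_x(\psi(w))\,x \qquad (\star)$$
for all $x\in{\cal A}$ and $w\in{\cal A}^*$. Once $(\star)$ is known, the general identity follows by induction on $|v|$: writing $v=xv'$ and applying $(\star)$ first to $xv'u$ and then to $xv'$, the inductive hypothesis $\psi(v'u)=\mu_{v'}(\psi(u))\psi(v')$, together with the fact that $\mu_x$ is a homomorphism, yields
$$\psi(vu)=\mu_x(\psi(v'u))\,x=\mu_x\mu_{v'}(\psi(u))\cdot\mu_x(\psi(v'))\,x=\mu_v(\psi(u))\psi(v).$$
The base case $v=\varepsilon$ is trivial, and the second ``in particular'' formula $\psi(vx)=\mu_v(x)\psi(v)$ is the special case $u=x$ of the main identity, using $\psi(x)=x$.

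\medskip

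To prove $(\star)$ I would induct on $|w|$. The base $w=\varepsilon$ gives $\psi(x)=x=\mu_x(\varepsilon)\,x$. For the inductive step, set $w=w'y$ with $y\in{\cal A}$. By the definition of $\psi$ and the inductive hypothesis applied to $w'$,
$$\psi(xw)=(\psi(xw')\,y)^{(+)}=(\mu_x(\psi(w'))\,xy)^{(+)},$$
while the target is $\mu_x(\psi(w))\,x=\mu_x((\psi(w')\,y)^{(+)})\,x$. The remaining task is thus the compatibility statement: for every palindrome $P$ and every letter $y$,
$$(\mu_x(P)\,xy)^{(+)}=\mu_x((Py)^{(+)})\,x. \qquad (\dagger)$$

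\medskip

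The main obstacle is $(\dagger)$. First I would verify that for every word $r$ one has the universal identity $\mu_x(r)\,x=x\,\widetilde{\mu}_x(r)$, where $\widetilde{\mu}_x$ is the symmetric variant with $\widetilde{\mu}_x(x)=x$ and $\widetilde{\mu}_x(z)=zx$ for $z\ne x$; a short induction on $|r|$ suffices. Combined with $\widetilde{\mu}_x(P^{\sim})=\mu_x(P)^{\sim}$, this shows that $\mu_x(P)\,x$ is a palindrome whenever $P$ is. Next, writing $Py=sQ$ with $Q$ the longest palindromic suffix of $Py$, so that $(Py)^{(+)}=sQs^{\sim}$, I would identify the longest palindromic suffix of $\mu_x(P)\,xy$ as exactly $\mu_x(Q)\,x$. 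This forces a split into the cases $y=x$ and $y\ne x$, because in the second case the $x$ prepended by $\mu_x$ to $y$ merges with the trailing $x$ of $\mu_x(P)\,x$. Ruling out any longer palindromic suffix uses the structure of $\mu_x$ to pull a hypothetical longer candidate back to a palindromic suffix of $Py$ longer than $Q$, contradicting maximality. A short manipulation of $\mu_x$ applied to the prefix $s$ then yields $(\dagger)$, closing both inductions and the proof.
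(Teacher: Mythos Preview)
The paper does not actually prove Theorem~\ref{thm:J}; it states the result and attributes the proof to J.~Justin~\cite{J} (for arbitrary alphabets). So there is no ``paper's own proof'' to compare your attempt against.

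That said, your outline is essentially the standard argument for Justin's formula: reduce to the single-letter case $(\star)$ by induction on $|v|$, then prove $(\star)$ by induction on $|w|$, with the work concentrated in the palindromic-closure compatibility $(\dagger)$. Your auxiliary observations (the identity $\mu_x(r)x=x\,\widetilde{\mu}_x(r)$, the relation $\widetilde{\mu}_x(P^{\sim})=\mu_x(P)^{\sim}$, and hence that $\mu_x(P)x$ is a palindrome when $P$ is) are all correct and are exactly the tools one uses. The one place where you are still at the level of a sketch is the identification of the longest palindromic suffix of $\mu_x(P)\,xy$ as $\mu_x(Q)\,x$ (with $Q$ the longest palindromic suffix of $Py$): the ``pull-back'' of a hypothetical longer palindromic suffix through $\mu_x$ requires checking that any palindromic suffix of $\mu_x(P)\,xy$ must itself be of the form $\mu_x(Q')\,x$ (case $y=x$) or $\mu_x(Q')$ (case $y\neq x$) for some suffix $Q'$ of $Py$, using that in $\mu_x$-images every occurrence of a letter $z\neq x$ is immediately preceded by $x$. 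This is routine but should be written out; once it is, your proof is complete and coincides with Justin's original argument.
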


\begin{example} {\em Let $v=ab^2a$.  One has (see Example \ref{ex:pm}) $\psi(v)=ababaababa$ and
$\psi(av)=\mu_a(\psi(v))a= aabaabaaabaabaa$.}
\end{example}

One can extend
  $\psi$  to ${\cal A}^{\omega}$ as follows: let  $x\in {\cal A}^{\omega}$ be an infinite word
$$ x = x_1x_2\cdots x_n\cdots, \ \  \ x_i\in {\cal A}, \ i\geq 1.$$
Since by property P2 of Proposition \ref{prop:basicp}   for all $n$, $\psi(x_{[n]})$ is a proper prefix of  $\psi(x_{[n+1]})$,  we  can define  the infinite word $\psi(x)$ as:
$$ \psi(x) = \lim_{n\rightarrow \infty} \psi(x_{[n]}).$$
The extended map $\psi: {\cal A}^{\omega}\rightarrow {\cal A}^{\omega}$ is injective. The word $x$ is called the {\em directive word} of $\psi(x)$. It has been proved in \cite{deluca} that the word $\psi(x)$ is a {\em characteristic Sturmian word}  if and only if both the letters $a$ and
$b$ occur infinitely often in the directive word  $x$. From property P4 of Proposition \ref{prop:basicp} one easily derives that $\psi({\cal A}^*)$ is equal to the set of the palindromic prefixes of all characteristic Sturmian words.

 \begin{example} {\em  Let ${\cal A}=\{a,b\}$. If $x = (ab)^{\omega}$, then the characteristic Sturmian word $\psi( (ab)^{\omega})$ having the directive word $x$ is the {\em Fibonacci word}
 $$ f = abaababaabaababaababaabaa\cdots$$}
 \end{example}

 \section{Central, standard, and Christoffel words}\label{sec:three}
 
 In this section we consider  three noteworthy classes of finite words
 called {\em central, standard}, and {\em Christoffel words}  which are closely interrelated and are very important in the combinatorics of Sturmian words  as they satisfy
 remarkable structural properties and, moreover, can  be regarded as a finite counterpart of Sturmian sequences. 

A word $w$ is called central if $w$ has  two
periods $p$ and $q$ such that $\gcd(p,q)=1$ and $|w|= p+q-2$. Thus a word is central if it is a power of a single letter or is a word of maximal length for which the theorem of Fine and Wilf does not apply.
The set of central words, usually denoted by $\PER$,  was introduced in~\cite{DM} where its main properties
were studied.  It has been proved that $\PER$ is equal
to the set of the palindromic prefixes of all characteristic Sturmian words, i.e.,
$$ \PER = \psi({\cal A}^*).$$ 
The term \emph{central} was given by J. Berstel and P. S\'e\'ebold in~\cite [Chap.~2]{LO2} to emphasize  the central role that these words play in Sturmian words theory. 

We say that a central word $w$ is of {\em order} $n$ if its directive word has length $n$. As proved in \cite{DM} the number of central words of order $n$ is $\phi(n+2)$ where
$\phi$ is the totient Euler function.
The following remarkable structural characterization of central words holds~\cite{deluca,CdL}:

\begin{prop}
	\label{Prop:uno}
	A word $w$ is central if and only if $w$ is a power of a single
	letter of $\cal A$ or it satisfies the equation:
	\[
		w=w_1abw_2=w_2baw_1
	\]
	with $w_{1},w_{2}\in {\cal A}^*$.  Moreover, in this latter case, $w_1$
	and $w_2$ are uniquely determined central words, $p=|w_1|+2$ and $q=|w_2|+2$ are
	coprime periods of $w$, and $\min\{p,q\}$ is the minimal period of
	$w$.
\end{prop}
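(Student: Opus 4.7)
I would prove the two implications separately.

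For the `if' direction: if $w = x^n$ for a letter $x \in \mathcal{A}$, then $w$ has period $1$ and (vacuously) period $n+1$, which are coprime and satisfy $|w| = 1 + (n+1) - 2$, so $w$ is central. Suppose instead $w = w_1 a b w_2 = w_2 b a w_1$, and set $p = |w_1|+2$, $q = |w_2|+2$; then $|w| = p + q - 2$. A position-by-position comparison of the two factorizations shows that both $p$ and $q$ are periods of $w$. To see $\gcd(p,q) = 1$, suppose for contradiction that $d = \gcd(p,q) \geq 2$. Since $|w| = p+q-2 \geq p+q-d$, Fine and Wilf gives that $w$ has period $d$; but $d$ divides $q - p = |w_2| - |w_1|$, which forces positions $|w_1|+1$ and $|w_2|+1$ to carry the same letter, whereas the two factorizations prescribe $a$ and $b$ respectively --- a contradiction.

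For the `only if' direction, let $w$ be central, not a power of a single letter, with coprime periods $p < q$ satisfying $|w| = p + q - 2$. Since $\PER = \psi(\mathcal{A}^*)$, $w$ is a palindrome. Because $p$ is a period, the prefix of $w$ of length $|w|-p = q-2$ is also a suffix of $w$; this border is then itself a palindrome (as $w$ is), and I denote it $w_2$. Symmetrically, the prefix of length $p-2$ is a palindrome, call it $w_1$. By property P4 of Proposition~\ref{prop:basicp}, both $w_1, w_2 \in \psi(\mathcal{A}^*) = \PER$, and are thus central, as claimed.

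To extract the specific factorization $w = w_1 a b w_2 = w_2 b a w_1$, I observe that since $w$ has period $p$ and $|w_2| = |w| - p$, the suffix of $w$ of length $|w_2|$ equals its prefix of length $|w_2|$, namely $w_2$; hence $w$ decomposes as a prefix of length $p$ followed by $w_2$. This prefix starts with $w_1$ (the prefix of length $p-2$) and ends with two letters $\alpha\beta$ at positions $p-1, p$; so $w = w_1 \alpha\beta w_2$, and the palindromic symmetry of $w$, $w_1$, $w_2$ yields also $w = w_2 \beta\alpha w_1$. It remains to argue $\alpha \neq \beta$: if $\alpha = \beta$, propagating this equality via the two periods $p$ and $q$ produces (through Fine and Wilf) a period of $w$ strictly smaller than $p$, contradicting the minimality of $p$. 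That minimality is itself verified by noting that any period $r < p$ would, via Fine and Wilf combined with $q$ coprime to $p$, force $w$ to have period $1$, hence be constant, contrary to assumption. Up to the naming of letters, $\alpha\beta = ab$ and $\beta\alpha = ba$, yielding the claimed factorization. Uniqueness of $w_1, w_2$ is immediate, as they are the unique palindromic prefixes of prescribed lengths $p-2$ and $q-2$.

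I expect the main obstacle to be the last step --- rigorously deducing $\alpha \neq \beta$ --- since this is where the coprime periods and the palindromic structure of $w$ must cooperate delicately, using Fine and Wilf in the boundary regime $|w| = p+q-2$.
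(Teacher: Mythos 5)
First, note that the paper itself contains no proof of Proposition~\ref{Prop:uno}: it is quoted from \cite{deluca,CdL}, so your argument must stand on its own. Your `if' direction does: reading $w_2$ (resp.\ $w_1$) as a border from the two factorizations gives the periods $p$ and $q$, and your coprimality argument is exactly right, since $|w|=p+q-2\geq p+q-d$ when $d=\gcd(p,q)\geq 2$, and $d\mid q-p=|w_2|-|w_1|$ forces $w_{|w_1|+1}=w_{|w_2|+1}$, i.e.\ $a=b$. In the converse, invoking $\PER=\psi({\cal A}^*)$ and property P4 of Proposition~\ref{prop:basicp} to get palindromicity of $w$, $w_1$, $w_2$ is heavier machinery than the classical self-contained arguments, but it is not circular within this paper, since those facts are stated (from \cite{DM,deluca}) independently of the proposition.

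The genuine gap is the step you yourself flag, $\alpha\neq\beta$, and as sketched it does not parse: Fine and Wilf cannot be applied to $w$ with the periods at hand, because $|w|=p+q-2$ is exactly one letter short of the threshold $p+q-\gcd(p,q)=p+q-1$ --- that shortfall is the very definition of centrality --- so no application of the theorem to $w$ itself yields ``a period strictly smaller than $p$.'' The missing idea is to \emph{extend} $w$ by one letter. From your two factorizations one reads off $w_{p-1}=\alpha$ and $w_{q-1}=\beta$. If $\alpha=\beta=x$, then the letter needed to prolong period $p$ by one position, namely $w_{|w|+1-p}=w_{q-1}$, and the letter needed to prolong period $q$, namely $w_{p-1}$, are both $x$; hence $wx$ has periods $p$ and $q$ and length $p+q-1$, so Fine and Wilf applies to $wx$ and gives period $1$: $w$ is a power of a single letter, a contradiction (no appeal to minimality of $p$ is needed here). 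Two smaller repairs in the same spirit: your minimality sketch for $\pi(w)=p$ needs, when $g=\gcd(r,q)\geq 2$, a second application of Fine and Wilf to the pair $(g,p)$, since the first application only yields period $g$, not $1$; and your uniqueness claim (``unique palindromic prefixes of prescribed lengths'') presupposes that the \emph{lengths} $p-2$, $q-2$ are unique, which is the actual content --- it follows because any solution pair of the equation yields coprime periods summing to $|w|+2$ whose minimum must be $\pi(w)$, and the ordering is then forced by the fact that position $p-1$ carries $a$ in one ordering and $b$ in the other, contradicting $w_{p-1}\neq w_{q-1}$. With these insertions your proof is correct and complete.
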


Another important family of finite words, strictly related to central words, is the class of finite standard words. In fact,  characteristic  Sturmian words can be equivalently defined in the following
way.  Let $c_1,\ldots,c_n,\ldots$ be any sequence of
integers such that $c_1\geq 0$ and $c_i>0$ for $i>1$.  We define,
inductively, the sequence of words $(s_n)_{n\geq -1}$, where
\[
	s_{-1}=b,\ s_0=a, \ \mbox{ and } \ 
	s_{n}=s_{n-1}^{c_{n}}s_{n-2} \ \mbox{ for } \ n\geq 1\,.
\]
Since for any $n\geq 0$, $s_n$ is a proper prefix of $s_{n+1}$, the sequence $(s_n)_{n\geq -1}$ converges to a limit $s$ which is a
characteristic Sturmian word (cf.~\cite{LO2}).  Any characteristic
Sturmian word is obtained in this way. The sequence $(c_1, c_2, \ldots, c_n, \ldots)$ is called
the {\em directive numerical sequence} of $s$. The Fibonacci word is obtained
when  $c_i=1$ for  $i\geq 1$. 

We shall denote by $\Stand$
the set of all the words $s_n$, $n\geq -1$ of any sequence
$(s_n)_{n\geq -1}$.  Any word of $\Stand$ is called \emph{finite standard  word}, or
simply {\em standard word}.

The following remarkable relation existing between standard and central
words, has been proved in~\cite{DM}:
$$ \Stand = {\cal A} \cup \PER \{ab, ba\}.$$
More precisely, the following holds (see, for instance~\cite[Propositions 4.9 and 4.10]{SC}):
\begin{prop} \label{prop:standStu}Any standard word different from a single letter can be uniquely expressed
as $\mu_v(xy)$ with $\{x,y\}=\{a,b\}$ and $v\in {\cal A}^*$. Moreover, one has:
$$\mu_v(xy)= \psi(v)xy.$$
\end{prop}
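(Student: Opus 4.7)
The plan is to split the statement into the decomposition assertion (existence and uniqueness of $v,x,y$) and the formula $\mu_v(xy)=\psi(v)xy$, and to establish the formula first, since both halves of the decomposition assertion then fall out essentially for free from it together with the already-displayed equality $\Stand = {\cal A}\cup \PER\{ab,ba\}$ and the identification $\PER = \psi({\cal A}^*)$.

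I would prove the formula $\mu_v(xy)=\psi(v)xy$ by induction on $|v|$, extending $v$ on the right. The base case $v=\varepsilon$ is trivial. In the inductive step, passing from $v$ to $vz$ with $z\in{\cal A}$, the key one-step computation is $\mu_z(xy)=z\,xy$: this is checked by a two-line case split on $z=x$ versus $z=y$ using only the definition~(\ref{eq:endo}) of $\mu_z$. Combining $\mu_{vz}(xy)=\mu_v(\mu_z(xy))=\mu_v(z)\,\mu_v(xy)$ with the inductive hypothesis $\mu_v(xy)=\psi(v)xy$ yields $\mu_{vz}(xy)=\mu_v(z)\psi(v)xy$, which is $\psi(vz)xy$ by the special case $\psi(vz)=\mu_v(z)\psi(v)$ of Justin's formula (Theorem~\ref{thm:J}).

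With the formula in hand, existence of a decomposition for a standard word $s\neq a,b$ is immediate from $\Stand={\cal A}\cup \PER\{ab,ba\}$: write $s = p\,xy$ with $p\in\PER$ and $\{x,y\}=\{a,b\}$, and use $\PER=\psi({\cal A}^*)$ to pick $v$ with $\psi(v)=p$, so $s=\psi(v)xy=\mu_v(xy)$. For uniqueness, if $\mu_v(xy)=\mu_{v'}(x'y')$ then $\psi(v)xy=\psi(v')x'y'$; the last two letters give $x=x'$ and $y=y'$, and then equality of total lengths forces $\psi(v)=\psi(v')$, hence $v=v'$ by the injectivity of $\psi$ (property P1 of Proposition~\ref{prop:basicp}).

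I do not expect any serious obstacle here: the induction is short, and the only care needed is to invoke Justin's theorem in the direction $\psi(vz)=\mu_v(z)\psi(v)$ rather than $\psi(zv)=\mu_z(\psi(v))z$, which is exactly what the rightward induction arranges.
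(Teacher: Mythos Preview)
The paper does not actually supply a proof of this proposition: it is stated with a citation to \cite[Propositions~4.9 and~4.10]{SC}. Your argument, by contrast, is a correct self-contained proof using only material already present in the paper. The induction establishing $\mu_v(xy)=\psi(v)xy$ is clean---the one-step identity $\mu_z(xy)=zxy$ for $\{x,y\}=\{a,b\}$ is exactly right, and the appeal to the special case $\psi(vz)=\mu_v(z)\psi(v)$ of Theorem~\ref{thm:J} closes the induction---and your reduction of existence and uniqueness to the displayed equality $\Stand={\cal A}\cup\PER\{ab,ba\}$ together with $\PER=\psi({\cal A}^*)$ and the injectivity of~$\psi$ is tight. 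A cosmetic remark: once $xy=x'y'$ you can cancel on the right in the free monoid ${\cal A}^*$ directly, without first invoking equality of lengths.
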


\vspace{2 mm}

Let us set for any  $v\in {\cal A}^*$ and $x\in{\cal A}$,  $p_x(v)= |\mu_v(x)|$. From Justin's formula one derives (see, for instance,~\cite{adlZ2})  that $p_x(v)$ is the minimal period of $\psi(vx)$ and then a period of $\psi(v)$. Moreover, $p_x(v)= \pi(\psi(v)x)$, $\gcd(p_x(v), p_y(v))=1$, 
\begin{equation}\label{eq:perpsi}
\pi(\psi(v))= \min \{p_x(v), p_y(v) \},
\end{equation}
 and from Proposition \ref{prop:standStu},  $$|\psi(v)|= p_x(v)+p_y(v)-2.$$

Let us now recall  the important notion of {\em Christoffel word} \cite{CFF} (see also~\cite{BLRS,BDR}). Let $p$ and $q$ be positive relatively prime integers such that $n= p+q$. The Christoffel word $w$ of slope $\frac{p}{q}$ is defined as $w= x_1\cdots x_n$ with 
\[ x_i = \left \{ \begin{array}{ll}
a & \mbox{if $ip \mod n > (i-1)p \mod n$} \\
b & \mbox{if $ip \mod n < (i-1)p \mod n$}
\end{array}
\right . \]
for $i=1,\ldots, n$ where $k \mod n$ denotes the remainder of the Euclidean division of $k$ by $n$.
The term slope given to the irreducible fraction $\frac{p}{q}$  is due to the fact that, as one easily derives from the definition,  $p = |w|_b$ and $q= |w|_a$. The words $a$ and $b$ are also Christoffel words with a respective slope $\frac{0}{1}$ and
$\frac{1}{0}$. The Christoffel words of slope $\frac{p}{q}$ with $p$ and $q$ positive integers are called {\em proper Christoffel words}. 

Let us denote by $CH$ the class of Christoffel words. The following important result, proved in \cite{BDL},
shows a basic relation existing between central and Christoffel words:
$$ CH = a \PER b \ \cup \cal A. $$
Hence, there exists a simple bijection of the set of central words onto the set of proper Christoffel words. Any proper Christoffel word $w$ can be uniquely represented as $a\psi(v)b$ for a suitable $v\in {\cal A}^*$. 

Let $<_{lex}$ denote  the lexicographic order of  ${\cal A}^*$ and let  $Lynd$ be the set of Lyndon words \cite{LO} of  ${\cal A}^*$ and $St$ be the set of  (finite) factors of all Sturmian words. The following theorem summarizes some results on Christoffel words proved in \cite{BL,BDL,BDR,adlZ2}.
\begin{thm}\label{thm:Lynd} Let $w=a\psi(v)b$ with $v\in{\cal A}^*$  be a proper Christoffel word. Then the following hold:
\begin{enumerate}
\item[1.] $CH = St \cap Lynd$, i.e., $CH$ equals the set of all  factors of Sturmian words which are Lyndon words.
\item[2.] There exist and are unique two Christoffel words
$w_1$ and $w_2$ such that $w=w_1w_2$. Moreover,  $w_1<_{lex}w_2$, and $(w_1,w_2)$ is the standard factorization of $w$ in Lyndon words. 
\item[3.] If $w$ has the slope $\eta(w)= \frac{p}{q}$, then $|w_1|=p'$,
$|w_2|=q'$, where $p'$ and $q'$ are the respective multiplicative inverse of $p$ and $q$ $(mod \ |w|)$. Moreover,
$p'= p_a(v)$, $q'=p_b(v)$ and $p= p_a(v^{\sim})$, $q=p_b(v^{\sim})$.
\end{enumerate} 
\end{thm}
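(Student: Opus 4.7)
The plan is to address the three items in order, exploiting the bijection $v \mapsto a\psi(v)b$ between central words and proper Christoffel words that was recalled just above the statement.

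For item 1, I split $CH = St \cap Lynd$ into the two inclusions. For $\subseteq$, take a proper Christoffel word $w=a\psi(v)b$. Membership in $St$ is immediate since $\psi(v)$ is a prefix of any characteristic Sturmian word extending $v$, and by Proposition~\ref{Prop:uno} the letter following $\psi(v)$ in such a word is $a$ or $b$ depending on $v$; in either case $a\psi(v)b$ occurs as a factor of a standard word (using also Proposition~\ref{prop:standStu}). To show $w$ is Lyndon I argue directly: $w$ starts with $a$ and ends with $b$, and every proper non-empty suffix $s$ is lex-greater than $w$ — if $s$ starts with $b$ this is trivial, and if $s$ starts with $a$ one uses the palindromicity of $\psi(v)$ and the structural identity $\psi(v)=w_1abw_2=w_2baw_1$ of Proposition~\ref{Prop:uno} to compare $s$ with the appropriate prefix of $w$. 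For the converse, any Sturmian Lyndon word $\ell$ must begin with $a$ and end with $b$ (else it would be non-primitive or not minimal in its conjugacy class), and writing $\ell=aub$ one shows $u$ is a central word by checking $u$ has two coprime periods summing to $|u|+2$ (coming from the Sturmian factor structure).

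For item 2, I invoke the Chen–Fox–Lyndon standard factorization theorem: every Lyndon word of length $\geq 2$ factors uniquely as $w=w_1w_2$ with $w_1,w_2\in Lynd$ and $w_2$ the longest proper Lyndon suffix of $w$, automatically yielding $w_1<_{lex}w_2$. Since $w_1$ and $w_2$ are factors of the Sturmian word $w$, they lie in $St$, and by item 1 they lie in $CH$.

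For item 3, I use that $w=a\psi(v)b$ has $|w|_b=p$, $|w|_a=q$ and $|w|=p+q=p_a(v)+p_b(v)$, the last equality coming from $|\psi(v)|=p_a(v)+p_b(v)-2$. The core claim is the congruence $p\cdot p_a(v)\equiv 1\pmod{|w|}$ (and symmetrically $q\cdot p_b(v)\equiv 1$), which together with $p_a(v)+p_b(v)=|w|$ uniquely pins down $p_a(v)=p'$ and $p_b(v)=q'$. I would prove this by induction on $|v|$ using Justin's formula (Theorem~\ref{thm:J}): prepending a letter $x$ to $v$ transforms $(p_a(v),p_b(v))$ and $(|w|_a,|w|_b)$ according to the action of $\mu_x$, and the congruence is preserved by a direct computation. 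To then identify $|w_1|=p'$, $|w_2|=q'$, I show using Proposition~\ref{Prop:uno} that the standard Lyndon factorization of $a\psi(v)b=aw_2baw_1b$ cuts between $aw_2b$ and $aw_1b$ (or the dual), and these lengths are exactly $p_a(v)+1 + $ suitable adjustments consistent with $p_a(v), p_b(v)$. Finally, the assertion $p=p_a(v^\sim)$, $q=p_b(v^\sim)$ follows by combining property P6 ($|\psi(v)|=|\psi(v^\sim)|$) with the fact that passing from $v$ to $v^\sim$ swaps the roles of the two coprime periods (and hence swaps $p_x$ with its multiplicative inverse modulo $|w|$).

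I expect the main obstacle to be item 3: the multiplicative-inverse congruence is a subtle arithmetic fact that requires either a careful induction on $|v|$ or an appeal to the geometric interpretation of Christoffel words as cutting sequences, and matching $(|w_1|,|w_2|)$ with $(p_a(v),p_b(v))$ requires precisely locating the standard factorization inside the palindromic structure given by Proposition~\ref{Prop:uno}. Items 1 and 2 are mostly bookkeeping around known Lyndon-word machinery.
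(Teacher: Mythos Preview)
The paper does not prove this theorem: it is stated as a summary of known results, with the proofs delegated to the references \cite{BL,BDL,BDR,adlZ2}. So there is no ``paper's own proof'' to compare against; your proposal is attempting more than the paper itself does.

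That said, your sketch has real gaps. In item~2, invoking the standard Lyndon factorization only shows that \emph{one} factorization into two Christoffel words exists (the one where $w_2$ is the longest proper Lyndon suffix). The theorem asserts uniqueness among \emph{all} factorizations $w=w_1w_2$ with $w_1,w_2\in CH$; a Lyndon word can in general be written as a product of two Lyndon words in several ways, so you still owe an argument (specific to Christoffel words, e.g.\ via the Stern--Brocot tree or via the period structure of $\psi(v)$) that no other split works. In item~3, the crucial step---identifying $(|w_1|,|w_2|)$ with $(p_a(v),p_b(v))$---is left as ``suitable adjustments'', which is not an argument; what is actually needed is to show that the split of $a\psi(v)b$ induced by $\psi(v)=w_1abw_2=w_2baw_1$ from Proposition~\ref{Prop:uno} coincides with the standard Lyndon factorization, and this requires checking that the shorter piece is genuinely the longest proper Lyndon suffix. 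Likewise, ``passing from $v$ to $v^\sim$ swaps the roles of the two coprime periods'' is exactly the nontrivial content of the result you are trying to prove, not a consequence of~P6 alone. Your inductive plan for the congruence $p\cdot p_a(v)\equiv 1\pmod{|w|}$ via Justin's formula is the right idea and can be made to work, but the rest of item~3 needs substantially more detail before it is a proof.
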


\begin{example} {\em The Christoffel word having slope $\frac{5}{12}$ is  $$w= aaabaabaaabaabaab =aub,$$
where $u=aabaabaaabaabaa = \psi(a^2b^2a)$ is the central word of length $15$ having  the two
coprime periods $7=p_a(v)$ and $10=p_b(v)$ with $v=a^2b^2a$. The word $w$ can be uniquely factorized as $w=w_1w_2$, where $w_1$ and $w_2$ are the Lyndon words $w_1= aaabaab$ and $w_2= aaabaabaab$. One has $w_1<_{lex} w_2$ with $|w_1|=7=p_a(v)$ and $|w_2|=10=p_b(v)$. Moreover, $w_2$ is the proper suffix of $w$ of maximal length which is a Lyndon word. Finally, 
$\psi(v^{\sim})= \psi(ab^2a^2)= ababaababaababa$, $p_a(v^{\sim})= 5= |w|_b$, $p_b(v^{\sim})=12=|w|_a$ and $|w|_bp_a(v)= 5^. 7=35 \equiv |w|_ap_b(v)= 12^.10=120 \equiv 1 \mod 17$. }
\end{example}

\section{The Fibonacci word}\label{sec:four}

The Fibonacci word $f$  is without doubt the most famous characteristic Sturmian word. As is well known it can be constructed in several differents  ways. As we have seen is Section \ref{sec:twotwo},
$f$ can be generated by the palindromization map $\psi$  from the directive word
$x= (ab)^{\omega}$, i.e., $f=\psi(x)$. In the following we set for any $n\geq 0$
$$ v^{(n)}= x_1\cdots x_n = x_{[n]},$$
so that  $v^{(0)}= \varepsilon$, $v^{(1)}=a$,
$$v^{(n)}= (ab)^{\frac{n}{2}}\  \mbox{ if} \ n \ \mbox{ is even, and} \  v^{(n)}= (ab)^{\lfloor\frac{n}{2}\rfloor}a \mbox{ if} \ n \ \mbox{ is odd}. $$

\begin{thm}\label{thm:main00} Let $n\geq 0$. For any $v\in {\cal A}^{n}$ one has:
$$ |\psi(v)| \leq |\psi(v^{(n)})|,$$
where the equality holds if and only if
$$ v = v^{(n)} \ \ \mbox{or} \ \ v = E(v^{(n)}).$$
\end{thm}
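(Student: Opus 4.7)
The plan is to translate the length inequality into one for the pair of periods $(p_a(v), p_b(v))$ and then carry out a short induction on $n$ producing a sharp Fibonacci bound.

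For $v \in {\cal A}^n$, write $a_n = p_a(v)$, $b_n = p_b(v)$, and $s_n = a_n + b_n$; the identity $|\psi(v)| = p_a(v) + p_b(v) - 2$ recalled after Proposition~\ref{prop:standStu} turns the claim into a bound for $s_n$. Using Theorem~\ref{thm:J} in the form $\mu_{uy} = \mu_u \circ \mu_y$ together with the definition of $\mu_y$, one sees that appending a letter $y \in {\cal A}$ to $v$ updates the pair according to the rule
\[
(a_{n+1}, b_{n+1}) = \begin{cases} (a_n,\; a_n + b_n) & \text{if } y = a,\\ (a_n + b_n,\; b_n) & \text{if } y = b,\end{cases}
\qquad (a_0, b_0) = (1,1).
\]
Two elementary consequences drive the argument. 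First, $s_n(v^{(n)}) = F_{n+1}$ (with the paper's indexing $F_{-1} = F_0 = 1$) by direct computation from the alternating recursion. Second, for every $v$ of length $n \geq 1$ the recursion forces $\max(a_n, b_n) = s_{n-1}$, since one coordinate is carried over while the other equals the previous sum.

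I then induct on $n$ the assertion $(\star_n)$: $s_n \leq F_{n+1}$, with equality iff $v \in \{v^{(n)}, E(v^{(n)})\}$. The cases $n = 0, 1$ are immediate and the symmetry $|\psi(v)| = |\psi(E(v))|$ (property P5) is what supplies the two candidate maximizers. For the inductive step, write $v = u\cdot y \in {\cal A}^{n+1}$ and chain
\[
s_{n+1}(v) = s_n(u) + p_y(u) \leq s_n(u) + \max(a_n(u), b_n(u)) = s_n(u) + s_{n-1}(v_1\cdots v_{n-1}) \leq F_{n+1} + F_n = F_{n+2},
\]
invoking the inductive hypothesis at steps $n$ and $n-1$.

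The step I expect to require the most care is the equality analysis. Saturating the chain forces $u \in \{v^{(n)}, E(v^{(n)})\}$ (by the IH) together with $p_y(u) = \max(a_n(u), b_n(u))$; to conclude that $v = v^{(n+1)}$ or $v = E(v^{(n+1)})$ one has to know which coordinate carries the maximum in each candidate. To handle this cleanly I plan to strengthen $(\star_n)$ so that it also records the explicit pair, namely that $\{a_n(v^{(n)}), b_n(v^{(n)})\} = \{F_{n-1}, F_n\}$ as consecutive Fibonacci numbers, with the larger one placed in the coordinate opposite to the last letter of $v^{(n)}$. With this bookkeeping it is routine to verify that the unique letter $y$ making $p_y(u)$ attain the max is exactly the letter that extends $v^{(n)}$ to $v^{(n+1)}$ (respectively $E(v^{(n)})$ to $E(v^{(n+1)})$), closing the induction.
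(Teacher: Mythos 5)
Your proof is correct, but it takes a genuinely different route from the paper's own proof of Theorem~\ref{thm:main00}. The paper argues entirely on the word side: writing $v=uy$ and $u=u'\bar y\zeta$ with $\zeta\in y^*$, it uses Justin's formula and Proposition~\ref{prop:standStu} to obtain $\psi(v)=\psi(u')\bar y y\psi(u)$, compares this with $\psi(v^{(n+1)})=\psi(v^{(n-1)})\bar z z\psi(v^{(n)})$, and runs the induction through the difference identity $|\psi(v^{(n+1)})|-|\psi(v)|=(|\psi(v^{(n)})|-|\psi(u)|)+(|\psi(v^{(n-1)})|-|\psi(u')|)$, with a separate easy case for $v\in y^*$. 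You instead arithmetize immediately: via $|\psi(v)|=p_a(v)+p_b(v)-2$ you track the pair $(p_a,p_b)$ under the one-letter updates $(a,b)\mapsto(a,a+b)$ and $(a,b)\mapsto(a+b,b)$ --- note that the composition rule $\mu_{uy}=\mu_u\circ\mu_y$ you invoke is just the definition of $\mu_v$ in the paper, not really an application of Theorem~\ref{thm:J} --- and your two observations (the max of the pair equals the previous sum; after appending a letter $z$ the strict max sits in the $\bar z$-coordinate, strictness holding because both entries are $\geq 1$) do make both the bound $s_{n+1}\leq F_{n+1}+F_n=F_{n+2}$ and the uniqueness analysis go through; I checked the equality bookkeeping, including the $n=1$ base, the automatic extremality of the length-$(n-1)$ prefix of an extremal $u$, and the $E$-symmetry $p_x(E(u))=p_{E(x)}(u)$, and it is sound. (In fact your coordinate invariant holds for \emph{every} word of length $\geq 1$, not just $v^{(n)}$, which would let you state the strengthened hypothesis more simply.) Conceptually, your argument is a letter-by-letter version of the paper's second, continuant-based proof of Theorem~\ref{thm:cf0101} in Section~\ref{sec:five}: the sum $p_a(v)+p_b(v)=|a\psi(v)b|$ is exactly the continuant of Theorem~\ref{thm:cf000}, your recursion is the continuant recursion (\ref{eq:cf02}) applied one step at a time, and your max-coordinate invariant plays the role of Lemma~\ref{lem:fib00}. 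What the paper's Section~\ref{sec:four} proof buys is that it stays in combinatorics on words and produces the decomposition $\psi(v)=\psi(u')\bar y y\psi(u)$ that is reused verbatim in the proofs of Theorems~\ref{thm:pmain} and~\ref{thm:comp000}; what yours buys is a shorter, uniform induction (no case split for powers of a letter, no block decomposition $u=u'\bar y\zeta$) that yields $|\psi(v^{(n)})|=F_{n+1}-2$, i.e.\ Lemma~\ref{lem:fib000}, as a free byproduct.
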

\begin{proof} The proof is by induction on the length $n$ of $v$. The result is trivially true for $n\leq 1$.
For $n=2$ the result is also true since $|\psi(aa)|=|\psi(bb)|=2$, whereas $|\psi(ab)|= |\psi(ba)|=3$.
Let us then suppose that the result is achieved up to the length $n\geq 2$ and prove it for  the length $n+1$.

We can write $v^{(n+1)}= v^{(n)}z$ with $z=a$ if $n$ is even and $z=b$, otherwise. By  Justin's formula (cf.~Theorem \ref{thm:J}) one has:
\begin{equation}\label{eq:fib0}
 \psi(v^{(n+1)})= \psi(v^{(n)}z)= \mu_{ v^{(n)}}(z)\psi(v^{(n)}).
 \end{equation}
From the definition $v^{(n)}= v^{(n-1)} \bar z$ having set $\bar z = E(z)$. Thus, since by (\ref{eq:endo}) $\mu_{\bar z}(z)= \bar z z$, from Proposition \ref{prop:standStu} one has
$$ \mu_{ v^{(n)}}(z)= (\mu_{ v^{(n-1)}}\circ \mu_{\bar z})(z)=  \mu_{ v^{(n-1)}}(\bar z z)= \psi(v^{(n-1)}) \bar z z$$
and replacing in (\ref{eq:fib0}), one derives:
\begin{equation}\label{eq:fib00}
\psi(v^{(n+1)})=\psi(v^{(n-1)}) \bar z z \psi(v^{(n)}).
\end{equation}
Let $v\in {\cal A}^{n+1}$ and write $v= uy$ with $u\in {\cal A}^n$ and $y\in {\cal A}$.  One has by  Justin's formula:
\begin{equation}\label{eq:eq:fib1}
\psi(v)= \psi(uy)= \mu_u(y)\psi(u).
\end{equation}

If $v\in y^*$, i.e., $v=y^{n+1}$, then $\psi(v)= y^{n+1}$. In this case we are done since for $n\geq 1$,
$|\psi(v)|= n+1 <|\psi(v^{(n+1)})|$ (cf.~Lemma \ref{lem:fib000}). Let us then suppose that $\card(\alf v)= 2$. We can write
$u=u'\bar y \zeta$ with $\zeta\in y^*$ and $u'\in {\cal A}^*$. From (\ref {eq:eq:fib1}) and  Proposition \ref{prop:standStu} one has, since $\mu_{\zeta}(y)=y$,
$$\psi(v)= \mu_{u'\bar y}(y)\psi(u)= (\mu_{u'}\circ \mu_{\bar y})(y)\psi(u)= \mu_{u'}(\bar y y)\psi(u)= \psi(u')\bar y y \psi(u).$$
From (\ref{eq:fib00}) and the preceding equation it follows:
\begin{equation}\label{eq:fib3}
|\psi(v^{(n+1)})|- |\psi(v)|= (|\psi(v^{(n)})|- |\psi(u)|)+ (|\psi(v^{(n-1)})|- |\psi(u')|).
\end{equation}
Setting $k= |u'|\leq n-1$ one has $|\psi(v^{(n-1)})|\geq |\psi(v^{(k)})|$, so that
$$|\psi(v^{(n+1)})|- |\psi(v)|\geq (|\psi(v^{(n)})|- |\psi(u)|)+ (|\psi(v^{(k)})|- |\psi(u')|).$$
By induction, $|\psi(v^{(n)})|\geq |\psi(u)|$ and $|\psi(v^{(k)})|\geq |\psi(u')|$ that implies
$$ |\psi(v^{(n+1)})|\geq  |\psi(v)|,$$
which proves the first part of theorem.

If $v=v^{(n+1)}$ or $v=E(v^{(n+1)})$, then $|\psi(v)| = |\psi(v^{(n+1)})|$. Indeed, one has only to observe that in view of property P5 of Proposition \ref{prop:basicp}, $\psi(E(v^{(n+1)}))= E (\psi(v^{(n+1)}))$, so that $ |\psi(E(v^{(n+1)}))|=|\psi(v^{(n+1)})|$.

Conversely, let us  suppose that  $|\psi(v)| = |\psi(v^{(n+1)})|$. From (\ref{eq:fib3}) one derives:
\begin{equation}\label{eq:fib4}
|\psi(v^{(n)})| = |\psi(u)| \ \mbox{and} \  |\psi(v^{(n-1)})| = |\psi(u')|.
\end{equation}
From equation $(\ref{eq:fib4})_2$ one obtains  $k=|u'|=n-1$. Indeed, if $k<n-1$ one would have:
$|\psi(v^{(n-1)})| >|\psi(v^{(k)})|\geq |\psi(u')|$, a contradiction. Hence, from $(\ref{eq:fib4})_2$ one has
$u=u'\bar y$ and $v=uy=u'\bar y y$.

By induction (\ref{eq:fib4}) is satisfied if and only if 
$$ \mbox{ a)} \  u= v^{(n)}  \ \mbox{or} \ \ \mbox{b)} \ u = E( v^{(n)})$$
and
$$ \mbox{ c)} \  u'= v^{(n-1)}  \ \mbox{or} \ \ \mbox{d)} \ u' = E( v^{(n-1)}).$$
Since $u'$ is a non-empty prefix of $u$,  condition $a)$ $\&$ $d)$, as well as $b)$ $\&$ $c)$, is
a contradiction. Indeed, $u'$ would begin with the letter $a$ and with the letter $b$. Thus  (\ref{eq:fib4}) is satisfied if and only if 
$$ u= v^{(n)}  \ \mbox{and} \ \  \ u' =  v^{(n-1)}$$
or
$$   u= E(v^{(n)}) \  \ \mbox{and} \ \  \ u' = E( v^{(n-1)}).$$
In the first case one has:
$$  v^{(n+1)} =  v^{(n)} z= uz=  v^{(n-1)}\bar z z.$$
Moreover, $u=u'\bar y=   v^{(n-1)}\bar y$, so that
$$  v^{(n+1)} =  uz = v^{(n-1)}\bar y z.$$
Hence, $y=z$ and 
$$ v = uy= uz =  v^{(n+1)}.$$
In the second case one has:
$$v = uy= u'\bar y y = E( v^{(n-1)})\bar y y =  E(v^{(n)})y.$$
Thus $E(v^{(n)})= E(v^{(n-1)})\bar y = E(v^{(n-1)}y)$, so that $ v^{(n)}=   v^{(n-1)}y$. Since
 $v^{(n+1)}= v^{(n)}z$, one derives:  $v^{(n+1)}= v^{(n-1)}yz$. This implies $z=\bar y$ and
 $$ v=  E(v^{(n)})y= E(v^{(n+1)}),$$
 which concludes our proof.
\end{proof}
\begin{lemma}\label{lem:fib000} Let $(F_n)_{n\geq -1}$ be the Fibonacci numerical sequence. For all $n\geq 0$ one has:
$$|\psi(v^{(n)})| = F_{n+1} -2.$$
\end{lemma}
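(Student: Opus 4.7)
The plan is to prove this by a straightforward induction on $n$, extracting a two-term linear recurrence for $|\psi(v^{(n)})|$ that matches (a shift of) the Fibonacci recurrence. The identity is compatible with the recursion already appearing in the proof of Theorem~\ref{thm:main00}, so the argument is essentially a restatement of equation~(\ref{eq:fib00}) together with two base cases.

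First I would verify the statement for $n=0$ and $n=1$: since $\psi(\varepsilon)=\varepsilon$, we have $|\psi(v^{(0)})| = 0 = F_{1}-2$, and since $\psi(a)=a$, we have $|\psi(v^{(1)})|=1 = F_{2}-2$. For the inductive step, suppose $n\geq 1$ and that the formula holds for all indices up to $n$. Writing $v^{(n+1)} = v^{(n)}z$ with $z\in\{a,b\}$ and $v^{(n)} = v^{(n-1)}\bar z$ (where $\bar z = E(z)$), Justin's formula (Theorem~\ref{thm:J}) gives
\[
\psi(v^{(n+1)}) \;=\; \mu_{v^{(n)}}(z)\,\psi(v^{(n)}).
\]
Using $\mu_{\bar z}(z)=\bar z z$ and then Proposition~\ref{prop:standStu},
\[
\mu_{v^{(n)}}(z) \;=\; \mu_{v^{(n-1)}}\bigl(\mu_{\bar z}(z)\bigr) \;=\; \mu_{v^{(n-1)}}(\bar z z) \;=\; \psi(v^{(n-1)})\,\bar z\,z,
\]
so that
\[
|\psi(v^{(n+1)})| \;=\; |\psi(v^{(n)})| + |\psi(v^{(n-1)})| + 2.
\]

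Now I would set $b_n := |\psi(v^{(n)})| + 2$. The displayed recurrence becomes $b_{n+1} = b_n + b_{n-1}$, which is precisely the Fibonacci recurrence. The base cases give $b_0 = 2 = F_1$ and $b_1 = 3 = F_2$, hence by induction $b_n = F_{n+1}$ for every $n\geq 0$, i.e.\ $|\psi(v^{(n)})| = F_{n+1}-2$, as required.

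There is no real obstacle; the only point demanding any care is the computation of $\mu_{v^{(n)}}(z)$, which exploits the fact that the directive word $v^{(n)}$ of the Fibonacci prefixes alternates letters, so that the last letter of $v^{(n)}$ is exactly $\bar z$. This is the same identity that yielded equation~(\ref{eq:fib00}) in the proof of Theorem~\ref{thm:main00}, and could alternatively be obtained by citing that equation directly.
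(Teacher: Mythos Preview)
Your proof is correct and follows essentially the same route as the paper: the paper invokes equation~(\ref{eq:fib00}) to get the recurrence $|\psi(v^{(n+1)})|=|\psi(v^{(n-1)})|+|\psi(v^{(n)})|+2$ and then applies the inductive hypothesis with the same two base cases. Your rederivation of that identity via Justin's formula and Proposition~\ref{prop:standStu}, and the substitution $b_n=|\psi(v^{(n)})|+2$, are cosmetic variations only.
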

\begin{proof} The result is trivial for $n\leq 1$. Indeed, for $n=0$ one has $|\psi(\varepsilon)|=0$ and
$F_1=2$. For $n=1$, $|\psi(a)|=1$ and $F_2=3$. Suppose by induction the result true up to $n$ and prove it for $n+1$. By (\ref{eq:fib00}) one has:
$$|\psi(v^{(n+1)})|= |\psi(v^{(n-1)})| +|\psi(v^{(n)})|+2.$$
Since by induction $|\psi(v^{(n-1)})|= F_n -2$ and $|\psi(v^{(n)})|=F_{n+1}-2$, the result
follows.
\end{proof}
\begin{cor}\label{cor:main}Let $n\geq 0$. For any $v\in {\cal A}^{n}$ one has:
$$ |\psi(v)| \leq F_{n+1}-2,$$
where the equality holds if and only if
$$ v = v^{(n)} \ \ \mbox{or} \ \ v = E(v^{(n)}).$$
\end{cor}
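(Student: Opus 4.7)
The plan is very short: this corollary is essentially a restatement of Theorem \ref{thm:main00} after substituting the explicit value of $|\psi(v^{(n)})|$ given by Lemma \ref{lem:fib000}. So I would not attempt a fresh argument; I would simply chain the two preceding results.

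More precisely, I would begin by invoking Theorem \ref{thm:main00}, which gives, for every $v\in\mathcal{A}^n$, the inequality $|\psi(v)|\leq|\psi(v^{(n)})|$, together with the characterization that equality holds iff $v=v^{(n)}$ or $v=E(v^{(n)})$. Then I would replace the right-hand side $|\psi(v^{(n)})|$ by $F_{n+1}-2$ using Lemma \ref{lem:fib000}. Transferring the equality clause verbatim completes the proof; there is nothing further to verify, since the equality condition does not depend on the numerical identification and is already stated in the form required by the corollary.

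There is no obstacle to speak of: the genuine combinatorial content was concentrated in Theorem \ref{thm:main00} (the induction on $|v|$ using Justin's formula to compare $\psi(v)$ with $\psi(v^{(n+1)})$), and the numerical content was handled in Lemma \ref{lem:fib000} (a one-line induction exploiting the recursion $|\psi(v^{(n+1)})|=|\psi(v^{(n)})|+|\psi(v^{(n-1)})|+2$ derived from equation (\ref{eq:fib00})). The corollary merely repackages these two results in a form convenient for later reference, making explicit that the Fibonacci numbers $F_{n+1}-2$ are the sharp upper bound for the length of palindromic prefixes of order $n$ of characteristic Sturmian words, attained exactly by $f$ and $E(f)$.
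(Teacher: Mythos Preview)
Your proposal is correct and matches the paper's own proof, which simply states that the result is immediate from Theorem~\ref{thm:main00} and Lemma~\ref{lem:fib000}.
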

\begin{proof} Immediate from Theorem \ref{thm:main00} and Lemma \ref{lem:fib000}.
\end{proof}
Let us recall (cf.,~Section \ref{sec:three}) that a  palindromic prefix of a characteristic
Sturmian word is of {\em order}  $n$ if its directive word is of length $n$. From Theorem~\ref{thm:main00}  the following extremal property of the Fibonacci word holds:
\begin{thm}\label{thm:main} A characteristic Sturmian word $s$ has the palindromic prefixes of any order of maximal length if and only if  $s=f$ or $s= E(f)$.
\end{thm}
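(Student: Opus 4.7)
My plan is to reduce the statement directly to Theorem~\ref{thm:main00}, using the bijection between characteristic Sturmian words and their directive words, together with the consistency of nested prefixes.

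Since $s$ is a characteristic Sturmian word, we can write $s = \psi(x)$ for a unique $x \in {\cal A}^\omega$ in which both $a$ and $b$ occur infinitely often. By property P4 of Proposition~\ref{prop:basicp} combined with the nesting property P2, the palindromic prefixes of $s$ are exactly the words $\psi(x_{[n]})$ for $n \geq 0$, and the prefix of order $n$ is $\psi(x_{[n]})$. So the hypothesis that $s$ has palindromic prefix of maximal length of every order translates to
\[
|\psi(x_{[n]})| = \max_{v \in {\cal A}^n} |\psi(v)| \qquad \text{for all } n \geq 0.
\]

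Applying Theorem~\ref{thm:main00}, this equality holds for a given $n$ if and only if $x_{[n]} \in \{v^{(n)}, E(v^{(n)})\}$. The key remaining step is to promote this pointwise condition to a global statement about $x$. For $n \geq 1$, the word $v^{(n)}$ begins with the letter $a$ while $E(v^{(n)})$ begins with $b$, so the first letter $x_1$ determines which alternative occurs at every $n \geq 1$ (since $x_{[n]}$ is a prefix of $x_{[n+1]}$ and all $x_{[n]}$ share the same initial letter). If $x_1 = a$, then $x_{[n]} = v^{(n)}$ for every $n$, hence $x = (ab)^\omega$ and $s = \psi((ab)^\omega) = f$; if $x_1 = b$, then $x_{[n]} = E(v^{(n)})$ for every $n$, hence $x = (ba)^\omega = E((ab)^\omega)$ and, by property P5, $s = \psi(E((ab)^\omega)) = E(\psi((ab)^\omega)) = E(f)$.

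Conversely, for $s = f$ or $s = E(f)$, the directive word is $(ab)^\omega$ or $(ba)^\omega$, whose length-$n$ prefix is exactly $v^{(n)}$ or $E(v^{(n)})$, and Theorem~\ref{thm:main00} guarantees maximal length. There is no real obstacle here: the whole content of the theorem is already packaged in Theorem~\ref{thm:main00}, and the only small point requiring care is the consistency argument linking the pointwise optimum at each order to a single infinite directive word, which is handled by the observation that $v^{(n)}$ and $E(v^{(n)})$ are distinguished by their first letter.
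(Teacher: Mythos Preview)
Your proof is correct and follows essentially the same approach as the paper: both reduce directly to Theorem~\ref{thm:main00}, and both resolve the pointwise alternative $x_{[n]}\in\{v^{(n)},E(v^{(n)})\}$ into a single global directive word by observing that $v^{(n)}$ and $E(v^{(n)})$ are distinguished by their first letter. The only cosmetic difference is that you make the use of properties P4 and P5 of Proposition~\ref{prop:basicp} explicit, whereas the paper leaves them implicit.
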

\begin{proof} Let $s= \psi(y)$, with $y=y_1\cdots y_n\cdots$, $y_i\in {\cal A}$, $i\geq 1$, be any characteristic Sturmian word. By Theorem \ref{thm:main00} for any $n\geq 0$,
$$ |\psi(y_1\cdots y_n)| \leq |\psi(v^{(n)})| = |\psi(E(v^{(n)}))|, $$
where $v^{(n)}$  and $E(v^{(n)})$ are respectively the prefixes of $(ab)^{\omega}$ and of $(ba)^{\omega}$ of length $n$. Since  $\psi(v^{(n)})$ and $E(\psi(v^{(n)}))$ are respectively the palindromic prefixes of order $n$ of $f$ and of $E(f)$,  the  `if part'  of theorem follows. 

Let now $s=\psi(y)$ be any characteristic Sturmian word such that for any $n$ and $v\in {\cal A}^n$,
 $|\psi(y_1\cdots y_n)| \geq |\psi(v)|$. In particular, one has  $|\psi(y_1\cdots y_n)| \geq |\psi(v^{(n)})|$.
 By Theorem \ref{thm:main00} it follows that for any $n\geq 0$
 $$|\psi(y_1\cdots y_n)| =  |\psi(v^{(n)})|.$$
 Moreover, the equality  occurs  if and only if  $y_1\cdots y_n = v^{(n)}$ or   $y_1\cdots y_n = E(v^{(n)})$. Since for $n>0$, $v^{(n)}$  begins with the letter $a$ and $E(v^{(n)})$ begins with the letter $b$, it follows that
 either for any $n\geq 0$, $y_1\cdots y_n = v^{(n)}$ or for any $n\geq 0$, $y_1\cdots y_n = E(v^{(n)})$, i.e., $s=f$
 or $s= E(f)$, which concludes the proof.
\end{proof}

Let us introduce in ${\cal A}^*$ the operator  $c$ defined as:
$c(\varepsilon)= \varepsilon$, $c(x)= x$ for any $x\in {\cal A}$, and for $v=uxy$ with $u\in{\cal A}^*$, $x,y\in {\cal A}$, $c(v)=c(uxy)= uyx$. Thus the operator $c$ acting on  words $v$ of length $\geq 2$ changes the suffix  $xy$ of $v$ of length 2 in $yx$. Note that if $x\neq y$, then $c(uxy)= u\bar x\bar y$. For instance, $c(abbaba)= abbaab$.
It is ready verified that the operator $c$ commutes with $E$, i.e.,  $c\circ E = E\circ c$.

The following theorem concerns the minimal periods of the central words having a directive word of any length.

\begin{thm}\label{thm:pmain} For any $n\geq 0$ and $v\in {\cal A}^n$,
$$\pi(\psi(v)) \leq \pi(\psi(v^{(n)})) = F_{n-1},$$
where the maximum is reached if and only if  $v$ is one of the following words:
$$ v^{(n)}, E(v^{(n)}), c(v^{(n)}), \ \mbox{and} \ \  E(c(v^{(n)})).$$
\end{thm}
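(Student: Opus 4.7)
The approach is to reduce the period bound to the length bound already proved in Theorem \ref{thm:main00}, via a simple recursion for the partial periods $p_x(v) = |\mu_v(x)|$. Starting from Justin's formula (Theorem \ref{thm:J}) and the definition of $\mu_y$, one obtains, for $v = uy$,
\[ p_y(v) = p_y(u) \quad \text{and} \quad p_{\bar y}(v) = p_a(u) + p_b(u), \]
where $\bar y$ denotes the letter opposite to $y$. Since $p_{\bar y}(u) \geq 1$, one has $p_{\bar y}(v) > p_y(v)$, so (\ref{eq:perpsi}) gives the key identity $\pi(\psi(uy)) = p_y(u)$.

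Writing $u = u'y'$ with $u' \in {\cal A}^{n-2}$ and applying the same recursion to $u$ yields $\max\{p_a(u), p_b(u)\} = p_a(u') + p_b(u') = |\psi(u')| + 2$. Since $|\psi(u')| \leq F_{n-1} - 2$ by Corollary \ref{cor:main}, one obtains
\[ \pi(\psi(v)) = p_y(u) \leq \max\{p_a(u), p_b(u)\} \leq F_{n-1}. \]
The specific equality $\pi(\psi(v^{(n)})) = F_{n-1}$ follows from the same computation applied to $u' = v^{(n-2)}$, combined with Lemma \ref{lem:fib000}.

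For the characterization of equality, $\pi(\psi(v)) = F_{n-1}$ forces simultaneously $|\psi(u')| = F_{n-1} - 2$ and $p_y(u) = \max\{p_a(u), p_b(u)\}$, i.e.\ $y = \bar{y'}$. By the equality case of Theorem \ref{thm:main00}, the first condition restricts $u'$ to $\{v^{(n-2)}, E(v^{(n-2)})\}$, yielding exactly four candidates
\[ v \in \{v^{(n-2)}ab,\ v^{(n-2)}ba,\ E(v^{(n-2)})ab,\ E(v^{(n-2)})ba\}. \]
A short case analysis on the parity of $n$, using $v^{(n)} = v^{(n-2)} z \bar z$ (with $z$ the last letter of $v^{(n-1)}$) and $c(v^{(n)}) = v^{(n-2)} \bar z z$, matches these four candidates with $\{v^{(n)}, c(v^{(n)}), E(v^{(n)}), E(c(v^{(n)}))\}$. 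Conversely, each of these four words achieves the maximum: for $v^{(n)}$ this was noted above; for the $E$-images it follows from property P5 of Proposition \ref{prop:basicp}; and for $c(v^{(n)})$ one applies the identity $\pi(\psi(uy)) = p_y(u)$ with $u = v^{(n-2)}\bar z$ and $y = z$, computing $p_z(v^{(n-2)}\bar z) = p_a(v^{(n-2)}) + p_b(v^{(n-2)}) = F_{n-1}$.

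The main obstacle is the bookkeeping in this last step: tracking which of the four candidate $v$'s corresponds to $v^{(n)}$ versus $c(v^{(n)})$ depending on the parity of $n$. The small cases $n \leq 2$ should also be verified directly, since the operator $c$ acts trivially on words of length $\leq 2$ and the four ``extremal'' words then collapse to fewer distinct elements.
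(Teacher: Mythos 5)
Your proposal is correct, and while it shares the paper's overall skeleton (reduce the period bound to the equality case of the length result, Theorem \ref{thm:main00}/Corollary \ref{cor:main}, then match four candidates by a parity case analysis), the mechanism you use to control the period is genuinely different. The paper computes $\pi(\psi(v))$ \emph{exactly}: it strips the maximal run of the final letter, writing $v=u'\bar y\zeta y$ with $\zeta\in y^*$, derives $\psi(v)=\psi(u')\bar y y\psi(u)$ from Justin's formula and Proposition \ref{prop:standStu}, and reads off $\pi(\psi(v))=|\psi(u')|+2$ from the structural characterization of central words (Proposition \ref{Prop:uno}); this costs a separate treatment of $v\in y^*$, the bookkeeping for $\zeta$, and an extra contradiction step in the equality analysis to force $|u'|=n-1$. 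You instead strip exactly two letters and settle for the inequality $\pi(\psi(v))=p_y(u)\le\max\{p_a(u),p_b(u)\}=p_a(u')+p_b(u')=|\psi(u')|+2$, obtained from the recursion $p_y(uy)=p_y(u)$, $p_{\bar y}(uy)=p_a(u)+p_b(u)$, which is immediate from $\mu_{uy}=\mu_u\circ\mu_y$; note that your key identity $\pi(\psi(uy))=p_y(u)$ is in fact already quoted in Section \ref{sec:three}, where $p_x(v)$ is stated to be the minimal period of $\psi(vx)$, so you could cite it rather than rederive it from (\ref{eq:perpsi}). Your route buys uniformity and a crisper equality analysis: unary words need no special case (if $y=y'$ the bound is strict because $p_{\bar y'}(u')\ge 1$), and equality forces both $y=\bar{y'}$ and $u'\in\{v^{(n-2)},E(v^{(n-2)})\}$ at once, yielding exactly the four candidates $u'y'\bar{y'}$, which your parity check correctly identifies with $v^{(n)},c(v^{(n)}),E(v^{(n)}),E(c(v^{(n)}))$ (for $n$ even, $v^{(n)}=v^{(n-2)}ab$, $c(v^{(n)})=v^{(n-2)}ba$, $E(v^{(n)})=E(v^{(n-2)})ba$, $E(c(v^{(n)}))=E(v^{(n-2)})ab$; the roles swap for $n$ odd). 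What the paper's version buys in exchange is the exact formula $\pi(\psi(v))=|\psi(u')|+2$ of (\ref{eq:psiu}), which pins down the minimal period of every central word rather than merely bounding it. Your verification of the `if' direction for $c(v^{(n)})$ via $p_z(v^{(n-2)}\bar z)=p_a(v^{(n-2)})+p_b(v^{(n-2)})=F_{n-1}$ (using Lemma \ref{lem:fib000}) is sound, as is your decision to check $n\le 2$ directly, where the four extremal words collapse.
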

\begin{proof} The result is trivial for $n=0$.  We first prove that for any $n \geq 0$
$ \pi(\psi(v^{(n+1)})) = F_{n}$. Indeed, setting $v^{(n+1)}= v^{(n)}z$ with $z\in {\cal A}$ one
has, in view of (\ref{eq:fib00}), 
$$\psi(v^{(n+1)})=\psi(v^{(n-1)}) \bar z z \psi(v^{(n)})=\psi(v^{(n)})z\bar z\psi(v^{(n-1)}).$$
From Proposition \ref{Prop:uno} and Lemma \ref{lem:fib000},  one has:
$$ \pi(\psi(v^{(n+1)})) = \min \{ |\psi(v^{(n-1)})|+2, |\psi(v^{(n)})|)+2\} = |\psi(v^{(n-1)})|+2 =F_n.$$
We prove now that for any $v\in{\cal A}^{n+1}$, $\pi(\psi(v)) \leq \pi(\psi(v^{(n+1)})) = F_{n}$. 

Indeed, we can write $v=uy$ with $u\in {\cal A}^{n}$ and $y\in {\cal A}$. If $u=y^n$, then $v=y^{n+1}$ and
$\psi(y^{n+1})= y^{n+1}$ that implies $\pi(y^{n+1})=1\leq F_n$. Let us then suppose $\card(\alf v)=2$.
As we have seen in the proof of Theorem \ref{thm:main00}, we can write
$u=u'\bar y \zeta$ with $\zeta\in y^*$ and $u'\in {\cal A}^*$ having:
$$\psi(v)= \psi(u')\bar y y \psi(u).$$
From Proposition \ref{Prop:uno}, as $|\psi(u')|<|\psi(u)|$, one has
\begin{equation}\label{eq:psiu}
\pi(\psi(v))= |\psi(u')|+2.
\end{equation}
By Theorem \ref{thm:main00} and Lemma \ref{lem:fib000}, $|\psi(u')|\leq |\psi(v^{(|u'|)})|= F_{|u'|+1}-2$. Since $|u'|\leq n-1$ it follows $|\psi(u')|\leq F_n-2$. Hence, from (\ref{eq:psiu}) one obtains that
for all $v\in {\cal A}^{n+1}$,
$\pi(\psi(v))\leq F_n= \pi(\psi(v^{(n+1)}))$, and the first part of theorem is proved.

As regards the second part, the result is trivial for $n\leq 1$. We shall suppose $n>1$ and
 prove that for $v\in{\cal A}^{n+1}$, $n\geq 1$, the maximal value of $\pi(\psi(v))$ is reached if and
only if $v$ is one of the following words $ v^{(n+1)}, E(v^{(n+1)})$, $ c(v^{(n+1)})$, and  $ E(c(v^{(n+1)}))$.

For what concerns the  `if part' of the statement we have proved above that $\pi(\psi(v^{(n+1)}))= \pi(E(\psi(v^{(n+1)})))=\pi(\psi(E(v^{(n+1)}))) = F_{n}$. Let us now prove that $$\pi(\psi(v^{(n+1)}))=\pi(\psi(c(v^{(n+1)}))). $$
Since $v^{(n+1)}= v^{(n)}z= v^{(n-1)}\bar z z$, one has $c(v^{(n+1)})= v^{(n-1)}z\bar z$. From Justin's
formula one derives:
$$\psi(c(v^{(n+1)}))=\psi( v^{(n-1)}z\bar z)= \mu_{v^{(n-1)}}(z\bar z)\psi(v^{(n-1)}z).$$
By Proposition \ref{prop:standStu},  $\mu_{v^{(n-1)}}(z\bar z)= \psi( v^{(n-1)})z\bar z$, so that
$$\psi(c(v^{(n+1)}))=\psi( v^{(n-1)})z\bar z \psi(v^{(n-1)}z).$$
From Proposition \ref{Prop:uno} and Lemma \ref{lem:fib000}, $\pi(\psi(c(v^{(n+1)})))= |\psi(v^{(n-1)})|+2= F_n=\pi(\psi(v^{(n+1)}))$.

Let us now prove the `only if part'. We suppose that $v\in {\cal A}^{n+1}$ is such that $\pi(\psi(v)) =
\pi(\psi(v^{(n+1)}))= F_n$. This implies by (\ref{eq:psiu}), $$|\psi(u')|= F_n-2 \ \mbox{ and} \ |u'|=n-1.$$
By Theorem \ref{thm:main00} this can occur if and only if  
$$ u' = v^{(n-1)}  \ \mbox{or} \ \  u' = E(v^{(n-1)}).$$ 
Let us recall that $v^{(n+1)}= v^{(n-1)}\bar z z$ and $v = u'\bar y y$. Suppose first $ u' = v^{(n-1)}$.
If $y=z$, we have $v=v^{(n-1)}\bar z z =  v^{(n+1)}$. If $y= \bar z$, then one has  $v=v^{(n-1)}z \bar z=
  c(v^{(n+1)})$. In the case $ u' = E(v^{(n-1)})$ one has $v= E(v^{(n-1)})\bar y y$. If $y= \bar z$, then $v = E(v^{(n-1)}\bar z z)= E(v^{(n+1)})$. If $y=z$, then $v = E(v^{(n-1)} z \bar z)=
E(c(v^{(n+1)}))$, which concludes the proof.
\end{proof}
\begin{example}{\em For $n=4$  the maximum value of the minimal period of central words of order 4
is $5= F_3$. It is reached with the directive words $abab$, $abba$, $baba$, and $baab$. The corresponding central words are respectively, $\psi(abab)=abaababaaba$, $\psi(abba)=ababaababa$,  $E(\psi(abab))$, and $E(\psi(abba))$.}
\end{example}
\begin{thm}\label{thm:pmain0} The minimal periods of the palindromic prefixes of any order of a characteristic Sturmian word $s$ are maximal
 if and only if $s=f$ or $s= E(f)$.
\end{thm}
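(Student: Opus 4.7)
The strategy is to lift the finite extremality result of Theorem~\ref{thm:pmain} to infinite directive words, in the same way that Theorem~\ref{thm:main} was lifted from Theorem~\ref{thm:main00}.

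For the `if' direction: if $s=f=\psi((ab)^{\omega})$, then the palindromic prefix of $s$ of order $n$ is $\psi(v^{(n)})$, which by Theorem~\ref{thm:pmain} has minimal period $F_{n-1}$, the maximum of $\pi(\psi(v))$ over all $v\in\mathcal{A}^{n}$. The case $s=E(f)=\psi((ba)^{\omega})$ follows from property~P5: $\psi(E(v^{(n)}))=E(\psi(v^{(n)}))$ has the same minimal period as $\psi(v^{(n)})$.

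For the `only if' direction: write $s=\psi(y)$ with $y=y_1y_2\cdots\in\mathcal{A}^{\omega}$, and assume $\pi(\psi(y_1\cdots y_n))=F_{n-1}$ for every $n\geq 0$. Theorem~\ref{thm:pmain} then forces
\[ y_1\cdots y_n\in\{v^{(n)},\ E(v^{(n)}),\ c(v^{(n)}),\ E(c(v^{(n)}))\}\quad\text{for every }n\geq 0. \]
The plan is to show that the only consistent infinite choice is $y=(ab)^{\omega}$ or $y=(ba)^{\omega}$, so that $s\in\{f,E(f)\}$.

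The key compatibility step that I would establish by direct inspection is: for every $n\geq 3$, the word $c(v^{(n)})$ is not a prefix of any of $v^{(n+1)}$, $E(v^{(n+1)})$, $c(v^{(n+1)})$, $E(c(v^{(n+1)}))$, and symmetrically for $E(c(v^{(n)}))$. Indeed, $c(v^{(n)})$ differs from $v^{(n)}$ precisely in positions $n-1$ and $n$; comparing with the explicit forms $v^{(n+1)}=v^{(n)}z$ and $c(v^{(n+1)})=v^{(n-1)}z\bar z$ from the proof of Theorem~\ref{thm:pmain} (and their $E$-images, which start with $b$ while $c(v^{(n)})$ starts with $a$), one sees that the two last letters never line up, because $v^{(n-1)}$ ends with the letter $z$ whereas $c(v^{(n)})$ has $\bar z$ in that position. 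Granting this, $y_1\cdots y_n\in\{v^{(n)},E(v^{(n)})\}$ for all $n\geq 3$, and since $y_1$ is fixed once and for all, the whole sequence $y$ must coincide with $(ab)^{\omega}$ or with $(ba)^{\omega}$.

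The main obstacle is purely the bookkeeping in the compatibility step: one has to track the alternation of letters in $v^{(n)}$, $v^{(n-1)}$, and their images under $c$ and $E$, and verify the mismatches at positions $n-1$ and $n$ in each of the four candidate continuations. Once that is done, the conclusion is immediate from the observation that any extension of a `twisted' prefix would exit the extremal set guaranteed by Theorem~\ref{thm:pmain}, contradicting the standing hypothesis on $s$.
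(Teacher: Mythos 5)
Your proposal is correct and takes essentially the same approach as the paper: both arguments lift Theorem~\ref{thm:pmain} by fixing the first letter of the directive word and then showing, via a letter mismatch at position $n-1$, that a prefix-consistent sequence can never pass through the twisted words $c(v^{(n)})$, $E(c(v^{(n)}))$ --- the paper phrases this as ``if $y_1\cdots y_n=v^{(n)}$ then $y_1\cdots y_{n+1}\neq c(v^{(n+1)})$,'' which is just the mirror image of your dead-end claim. Your compatibility step does check out on inspection (note only that the mismatch with $c(v^{(n+1)})$ occurs at position $n-1$ alone, since both words carry $\bar{x}_n$ at position $n$), so the two proofs are the same in substance.
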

\begin{proof} The proof follows the same lines of that of Theorem \ref{thm:main}. Let $s= \psi(y)$, with $y=y_1\cdots y_n\cdots$, $y_i\in {\cal A}$, $i\geq 1$, be any characteristic Sturmian word. By Theorem \ref{thm:pmain}, for any $n\geq 0$,
$$ \pi(\psi(y_1\cdots y_n)) \leq \pi(\psi(v^{(n)})) = \pi(\psi(E(v^{(n)}))), $$
where $v^{(n)}$  and $E(v^{(n)})$ are respectively the prefixes of $(ab)^{\omega}$ and of $(ba)^{\omega}$ of length $n$. Since  $\psi(v^{(n)})$ and $E(\psi(v^{(n)}))$ are respectively the palindromic prefixes of order $n$ of $f$ and $E(f)$,  the  `if part'  of theorem follows. 

Let now $s=\psi(y)$ be any characteristic Sturmian word such that for any $n$ and $v\in {\cal A}^n$,
 $\pi(\psi(y_1\cdots y_n)) \geq \pi(\psi(v))$. In particular, one has  $\pi(\psi(y_1\cdots y_n)) \geq \pi(\psi(v^{(n)}))$.
 By Theorem \ref{thm:pmain} it follows that for any $n\geq 0$
 \begin{equation}\label{eq:psivin}
 \pi(\psi(y_1\cdots y_n)) =  \pi(\psi(v^{(n)})),
 \end{equation}
 where the equality occurs if and only if for any $n$, $y_1\cdots y_n$ is one of the following
 words $ v^{(n)}, E(v^{(n)}), c(v^{(n)})$, and $E(c(v^{(n)}))$. We can suppose, without loss of generality, that $y_1=a$, i.e., $y\in a{\cal A}^{\omega}$. In this case equality (\ref{eq:psivin})
 implies that for each $n$
 $$ \mbox{either} \  \ y_1\cdots y_n = v^{(n)} \ \ \mbox{or} \ \  \ y_1\cdots y_n = c(v^{(n)}).$$
 Let us prove that the preceding equation implies that for all $n\geq 0$ one has $y_1\cdots y_n = v^{(n)}$. This is trivial for $n\leq 1$. For $n=2$, one has that $y_1y_2= v^{(2)}=ab$ or $y_1y_2=ba$.
 However, this second case cannot occur since $y_1=a$. Thus $y_1y_2= v^{(2)}$. Let us now prove
 by induction that if  $y_1\cdots y_n = v^{(n)}$ with $n\geq 2$, then  $y_1\cdots y_{n+1} = v^{(n+1)}$.
 Indeed, suppose by contradiction that  $y_1\cdots y_{n-1} y_ny_{n+1} = c( v^{(n+1)})$. This would imply
 $v^{(n+1)}= y_1\cdots y_{n-1}\bar y_n \bar y_{n+1}$, so that  $v^{(n)}= y_1\cdots y_{n-1}\bar y_n$
 which is absurd. Thus $y= (ab)^{\omega}$ and $s=f$. 
 
  If $y\in b{\cal A}^{\omega}$, one
 proves in a perfect similar way that $y =(ba)^{\omega}$, i.e., $s=E(f)$ and this concludes
 the proof.
\end{proof}

The following lemma relates the composition, i.e., the number of letters $a$
and $b$, of a proper Christoffel word $a\psi(v)b$ to the minimal period of $\psi(v^{\sim})$.

\begin{lemma}\label{lem:comp1} For any proper Christoffel word $w= a\psi(v)b$,
$$\pi(\psi(v^{\sim})) = \min\{|w|_a, |w|_b\}.$$ 
In particular, if $v\in a{\cal A}^*$, then
$$ \pi(\psi(v^{\sim})) = |\psi(v)|_b+1.$$
\end{lemma}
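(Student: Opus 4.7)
The plan is to reduce the statement directly to two pieces of machinery already established in Section \ref{sec:three}: equation~(\ref{eq:perpsi}), which expresses $\pi(\psi(u))$ as the minimum of $p_a(u)$ and $p_b(u)$, and part~3 of Theorem~\ref{thm:Lynd}, which identifies $p_a(v^{\sim})$ and $p_b(v^{\sim})$ with the two components of the slope of the Christoffel word $w=a\psi(v)b$. The only genuinely new computation is a small counting argument showing that the first letter of $v$ governs which of $|w|_a,|w|_b$ is the minimum.

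For the first identity, I would apply (\ref{eq:perpsi}) to $v^{\sim}$ to write
\[
\pi(\psi(v^{\sim}))=\min\{p_a(v^{\sim}),\,p_b(v^{\sim})\}.
\]
Since $w=a\psi(v)b$ is a proper Christoffel word of slope $\eta(w)=|w|_b/|w|_a$, part~3 of Theorem~\ref{thm:Lynd} gives $p_a(v^{\sim})=|w|_b$ and $p_b(v^{\sim})=|w|_a$. Substituting yields $\pi(\psi(v^{\sim}))=\min\{|w|_a,|w|_b\}$, which is the first assertion.

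For the ``in particular'' part, since $|w|_a=|\psi(v)|_a+1$ and $|w|_b=|\psi(v)|_b+1$, it suffices to show that when $v\in a{\cal A}^*$ one has $|\psi(v)|_b<|\psi(v)|_a$, so that the minimum is attained at $|w|_b=|\psi(v)|_b+1$. I would prove this by writing $v=au$ and applying Justin's formula (Theorem~\ref{thm:J}) to obtain $\psi(v)=\mu_a(\psi(u))\,a$. Because $\mu_a(a)=a$ and $\mu_a(b)=ab$, for any word $z$ one has $|\mu_a(z)|_a=|z|_a+|z|_b$ and $|\mu_a(z)|_b=|z|_b$, hence
\[
|\psi(v)|_a=|\psi(u)|_a+|\psi(u)|_b+1,\qquad |\psi(v)|_b=|\psi(u)|_b,
\]
so $|\psi(v)|_a-|\psi(v)|_b=|\psi(u)|_a+1\ge 1$. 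This strict inequality forces $\min\{|w|_a,|w|_b\}=|w|_b=|\psi(v)|_b+1$.

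There is no real obstacle here: everything reduces to quoting (\ref{eq:perpsi}) and Theorem~\ref{thm:Lynd}(3), plus the elementary observation that $\mu_a$ preserves the number of $b$'s while strictly increasing the number of $a$'s, which is precisely what makes the first letter of a directive word determine which letter dominates in $\psi(v)$.
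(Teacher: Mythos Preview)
Your proof is correct and follows exactly the same route as the paper: apply (\ref{eq:perpsi}) to $v^{\sim}$ and then invoke statement~3 of Theorem~\ref{thm:Lynd} to identify $p_a(v^{\sim}),p_b(v^{\sim})$ with $|w|_b,|w|_a$. The only difference is that the paper simply asserts ``if $v\in a{\cal A}^*$ then $|w|_b<|w|_a$'' without justification, whereas you supply the short Justin's-formula computation showing $|\psi(au)|_a-|\psi(au)|_b=|\psi(u)|_a+1\geq 1$; this is a welcome clarification but not a different approach.
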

\begin{proof} In view of (\ref{eq:perpsi}) and statement 3.  of Theorem \ref {thm:Lynd}, one has
$$ \pi(\psi(v^{\sim})) = \min\{ p_a(v^{\sim}), p_b(v^{\sim})\}=  \min\{|w|_a, |w|_b\}.$$
 If $v\in a{\cal A}^*$, then $|w|_b <|w|_a$. Hence, in such a case
 $$ \pi(\psi(v^{\sim})) = |w|_b = |\psi(v)|_b+1. \qedhere$$
\end{proof}

 Let us denote by $d$ the operator in ${\cal A}^*$ defined as:
$d(\varepsilon)= \varepsilon$, $d(x)= x$ for any $x\in {\cal A}$, and for $v=xyu$ with $u\in{\cal A}^*$, $x,y\in {\cal A}$, $d(v)=d(xyu)= yxu$. Thus the operator $d$ acting on  words $v$ of length $\geq 2$ changes the prefix  $xy$ of $v$ of length 2 in $yx$. As it is readily verified the operator $d$
 is related to $c$ as follows: for any $v\in {\cal A}^*$, $d(v)= (c(v^{\sim}))^{\sim}$. Moreover,  $d$
 commute with $E$.
 
 \begin{thm}\label{thm:comp000} For any $n\geq 0$ and $v \in a{\cal A}^*$ of length $n$
 $$ |\psi(v)|_b\leq  |\psi(v^{(n)})|_b = F_{n-1}-1,$$
 where the equality holds if and only if $v= v^{(n)}$ or $v= E(d(v^{(n)}))$.
 \end{thm}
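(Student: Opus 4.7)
The plan is to reduce the statement to the extremal-period result (Theorem~\ref{thm:pmain}) via Lemma~\ref{lem:comp1}. For any $v\in a{\cal A}^*$ of length $n$, that lemma gives
$$|\psi(v)|_b = \pi(\psi(v^{\sim})) - 1.$$
Since $v^{\sim}$ has length $n$, Theorem~\ref{thm:pmain} immediately yields $\pi(\psi(v^{\sim}))\leq F_{n-1}$, hence the upper bound $|\psi(v)|_b\leq F_{n-1}-1$. To identify the common value, I would apply the same identity to $v^{(n)}$ itself: since $(v^{(n)})^{\sim}$ equals $v^{(n)}$ when $n$ is odd and $E(v^{(n)})$ when $n$ is even, Theorem~\ref{thm:pmain} together with property P5 of Proposition~\ref{prop:basicp} gives $\pi(\psi((v^{(n)})^{\sim}))=F_{n-1}$, so $|\psi(v^{(n)})|_b = F_{n-1}-1$.

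For the equality case, Theorem~\ref{thm:pmain} tells us $\pi(\psi(v^{\sim}))=F_{n-1}$ exactly when
$$v^{\sim}\in\{v^{(n)},\ E(v^{(n)}),\ c(v^{(n)}),\ E(c(v^{(n)}))\}.$$
Taking reversals, and using the identity $(c(u))^{\sim}=d(u^{\sim})$ (equivalent to the relation $d(w)=(c(w^{\sim}))^{\sim}$ noted in the text) together with the facts that $E$ commutes with both reversal and $d$, the four conditions on $v$ become
$$v\in\{(v^{(n)})^{\sim},\ E((v^{(n)})^{\sim}),\ d((v^{(n)})^{\sim}),\ E(d((v^{(n)})^{\sim}))\}.$$

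The remaining step, which is the only place that needs care, is to resolve $(v^{(n)})^{\sim}$ by the parity of $n$ and then to impose $v\in a{\cal A}^*$. When $n$ is odd, $(v^{(n)})^{\sim}=v^{(n)}$ and the four candidates become $\{v^{(n)},E(v^{(n)}),d(v^{(n)}),E(d(v^{(n)}))\}$; when $n$ is even, $(v^{(n)})^{\sim}=E(v^{(n)})$, and using the commutation of $E$ with $d$ (and $E\circ E=\mathrm{id}$) one obtains the same set in either parity. For $n\geq 2$ the word $v^{(n)}$ begins with $ab$, so $d(v^{(n)})$ begins with $ba$; hence exactly $v^{(n)}$ and $E(d(v^{(n)}))$ among the four candidates begin with the letter $a$, while $E(v^{(n)})$ and $d(v^{(n)})$ begin with $b$. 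For $n=1$ only $v^{(1)}=a$ lies in $a{\cal A}^*$, and $E(d(v^{(1)}))=b$ is excluded there (and coincides with $v^{(1)}$ up to this exclusion). This yields exactly the characterization $v=v^{(n)}$ or $v=E(d(v^{(n)}))$ stated in the theorem.
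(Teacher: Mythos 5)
Your proposal is correct and takes essentially the same route as the paper: reduce via Lemma~\ref{lem:comp1} to the extremal-period result of Theorem~\ref{thm:pmain} (handling $(v^{(n)})^{\sim}$ by the parity of $n$), then settle the equality case by deciding which of the four extremal directive words $v^{(n)}, E(v^{(n)}), c(v^{(n)}), E(c(v^{(n)}))$ are compatible with $v\in a{\cal A}^*$. The only difference is bookkeeping: the paper excludes candidates for $v^{\sim}$ in two parity cases by inspecting last letters, whereas you reverse the whole candidate set once, using $(c(u))^{\sim}=d(u^{\sim})$ and the commutations with $E$, observe the resulting set $\{v^{(n)},E(v^{(n)}),d(v^{(n)}),E(d(v^{(n)}))\}$ is parity-independent, and filter by first letters --- a cosmetic streamlining of the same argument.
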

 \begin{proof} By Lemma \ref{lem:comp1} one has:
 $$ |\psi(v)|_b = \pi(\psi(v^{\sim}))-1 \ \mbox{and} \   |\psi(v^{(n)})|_b = \pi(\psi((v^{(n)})^{\sim}))-1.$$
By Theorem \ref{thm:pmain},
 $$\pi(\psi(v^{\sim})) \leq \pi(\psi(v^{(n)})) = F_{n-1}.$$
 Moreover, since  $(v^{(n)})^{\sim}$ is equal to $v^{(n)}$ if $n$ is odd and is equal to $E(v^{(n)})$
 if $n$ is even, by Theorem \ref{thm:pmain} one has
 $$\pi(\psi((v^{(n)})^{\sim})) = \pi(\psi(v^{(n)})).$$
 Hence,
  $$ |\psi(v)|_b = \pi(\psi(v^{\sim}))-1\leq F_{n-1}-1=|\psi(v^{(n)})|_b$$
  and the first part of the theorem is proved.
  
  Now $ |\psi(v)|_b =  |\psi(v^{(n)})|_b$ if and only if 
  $$ \pi(\psi(v^{\sim})) =  \pi(\psi(v^{(n)})).$$
  By Theorem \ref{thm:pmain} this occurs if and only if $v^{\sim}$ is one of the following words:
  $ v^{(n)}, E(v^{(n)}), c(v^{(n)})$, and  $E(c(v^{(n)}))$. We have to consider two cases:
  
  \vspace{2 mm}
  
  \noindent
  Case 1. $n$ is even. The word  $v^{(n)}$ terminates  with the  letter $b$, so that, as $v$ begins with the letter $a$, $v^{\sim}$ cannot be equal to $v^{(n)}$. Similarly, $v^{\sim}$ cannot be equal to
  $E(c(v^{(n)}))$. Indeed, $c(v^{(n)})$ terminates with the letter $a$ and  $E(c(v^{(n)}))$ with the letter $b$. This would imply that, $v$ will begin with the letter $b$ which is a contradiction.
  
  Now, as one easily verifies,  $v^{\sim} = E(v^{(n)})$ if and only if  $v= v^{(n)}$. Moreover,
   $v^{\sim} = c(v^{(n)})$ if and only if  $v= E(d(v^{(n)}))$.
  
    \vspace{2 mm}
  
  \noindent
  Case 2. $n$ is odd. The word  $v^{(n)}$ is a palindrome beginning and terminating with the letter $a$. Thus $E(v^{(n)})$ is also a palindrome terminating with the letter $b$. Thus, as $v$ begins with the letter $a$, $v^{\sim}$ cannot be equal to  $E(v^{(n)})$. Similarly, the word $c(v^{(n)})$ terminates with the letter $b$, so that  $v^{\sim}$ cannot be equal to $c(v^{(n)})$.
  
  Trivially, as $v^{(n)}$ is a palindrome, $v^{\sim}= v^{(n)}$ if and only if $v = v^{(n)}$. Finally, it is ready verified that $v^{\sim}= E(c(v^{(n)}))$ if and only if  $v= E(d(v^{(n)}))$.
  
  \vspace{2 mm}
  
  Hence, in conclusion the maximal value of $ |\psi(v)|_b $ is reached if and only if $v= v^{(n)}$ or $v= E(d(v^{(n)}))$.
   \end{proof}
   
   \begin{example} {\em For $n=5$ the central words of $a{\cal A}^*$ with a maximal number of $b$  have the directive words  $v^{(5)}=ababa$ and $E(d(v^{(5)}))= abbab$. One has $\psi(ababa)= abaababaabaababaaba$, $\psi(abbab)= ababaabababaababa$, and the number of $b$ is $7=F_5-1$}.
   \end{example}
   
   \begin{thm}\label{thm: Stuca} The only characteristic Sturmian words beginning with the letter $a$ whose palindromic prefixes of any order have the maximal number of occurrences of the letter $b$ are the Fibonacci word 
   $f= \psi((ab)^{\omega}$ and the word $g = \psi(ab^2(ab)^{\omega})$.
   \end{thm}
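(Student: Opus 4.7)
The plan is to apply Theorem~\ref{thm:comp000} at every order and classify the admissible directive words. Write $s=\psi(y)$ with $y=y_1y_2\cdots\in a{\cal A}^{\omega}$, so that for each $n\geq 0$ the palindromic prefix of $s$ of order $n$ is $\psi(y_{[n]})$ with $y_{[n]}\in a{\cal A}^{n}$. By Theorem~\ref{thm:comp000}, maximality of $|\psi(y_{[n]})|_b$ holds at every $n\geq 1$ if and only if
\[
y_{[n]} \in \{v^{(n)},\, E(d(v^{(n)}))\}\qquad\text{for every } n.
\]
The task thus reduces to determining all right-infinite words $y\in a{\cal A}^{\omega}$ whose every finite prefix lies in this two-element set.

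The key preliminary step is an explicit formula for $E(d(v^{(n)}))$. For $n\geq 2$ one has $v^{(n)}=ab\, v^{(n-2)}$, so $d(v^{(n)})=ba\, v^{(n-2)}$ and therefore
\[
E(d(v^{(n)})) = ab\, E(v^{(n-2)}).
\]
The boundary cases are straightforward: for $n=1$ the two candidates are $a$ and $b$, and $y_1=a$ selects $y_1=v^{(1)}$; for $n=2$ both candidates equal $ab$, forcing $y_2=b$. For $n\geq 3$ the identity above shows that $v^{(n)}$ begins with $aba$ while $E(d(v^{(n)}))$ begins with $abb$, so the two options differ precisely at position $3$. Hence the single letter $y_3$ determines which family is selected at every order.

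Finally I would run the case analysis on $y_3$. If $y_3=a$, compatibility at every order forces $y_{[n]}=v^{(n)}$ for all $n$, giving $y=(ab)^{\omega}$ and $s=f$. If $y_3=b$, then $y_{[n]}=E(d(v^{(n)}))=ab\, E(v^{(n-2)})$ for every $n\geq 3$, and passing to the limit yields $y=ab\,(ba)^{\omega}=ab^{2}(ab)^{\omega}$, hence $s=g$. The converse, that both $f$ and $g$ actually achieve the maximum at every order, is immediate from Theorem~\ref{thm:comp000}. The only genuine computation is the identity $E(d(v^{(n)}))=ab\, E(v^{(n-2)})$ and the observation that the two candidates coincide for $n\leq 2$ but diverge from $n=3$ onward; the rest is a routine propagation-of-prefixes argument, which I expect to pose no real obstacle.
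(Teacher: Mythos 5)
Your proposal is correct and follows essentially the same route as the paper: both reduce the problem via Theorem~\ref{thm:comp000} to the two-candidate classification $y_{[n]}\in\{v^{(n)},E(d(v^{(n)}))\}$, note the candidates coincide for $n\le 2$ and first diverge at position $3$ (your identity $E(d(v^{(n)}))=ab\,E(v^{(n-2)})$ is exactly the paper's description $x_1x_2x_2x_3\cdots x_{n-1}$), and conclude $y=(ab)^{\omega}$ or $y=ab^2(ab)^{\omega}$; your observation that $y_3$ alone selects the branch at every order is a slightly cleaner packaging of the paper's step-by-step propagation. The only nit is the phrase ``differ precisely at position $3$''---the two candidates actually differ at every position from $3$ onward---but since your argument only uses that they agree up to position $2$ and have distinct third letters, nothing is affected.
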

   
   \begin{proof} Let  $s=\psi(y)$ be any characteristic Sturmian word such that $y=y_1y_2\cdots y_n\cdots$, with $y_1=a$ and $y_i\in {\cal A}$ for $i>1$. Let us suppose that
    for any $n\geq 1$, 
   $$ |\psi(y_1y_2\cdots y_n)|_b =  |\psi(x_1x_2\cdots x_n)|_b$$
   where $x_1x_2\cdots x_n = v^{(n)}$ is the prefix of length $n$ of the word $(ab)^{\omega}$. 
  Setting $v= y_1y_2\cdots y_n $, from Theorem \ref{thm:comp000} the preceding equality can occur if and only if $v= v^{(n)}$ or $v= E(d(v^{(n)}))$, that is
  $$ v = x_1x_2x_3\cdots x_n  \ \mbox{or} \ \ v= x_1x_2x_2x_3\cdots x_{n-1}.$$
  For any $n>2$, if $y_1y_2\cdots y_n= v^{(n)}$, then $y_1y_2\cdots y_n y_{n+1}\neq E(d(v^{(n+1)}))$
  so that $y_1y_2\cdots y_n y_{n+1} = v^{(n+1)}$.
  Similarly, if $y_1y_2\cdots y_n= E(d(v^{(n)}))$, then $y_1y_2\cdots y_n y_{n+1}\neq v^{(n+1)}$
  so that $y_1y_2\cdots y_n y_{n+1} =E(d(v^{(n+1)}))$.
  
  Thus if $y_1y_2y_3 = v^{(3)}=aba$, then  $y= (ab)^{\omega}$ and $s= \psi((ab)^{\omega})= f$. If, on the contrary, $y_1y_2y_3 = E(d(v^{(3)}))= abb$, then $y= ab^2(ab)^{\omega}$ and 
  $s=\psi(ab^2(ab)^{\omega})$.
   \end{proof}
   
From the preceding theorem one derives the following extremal property of Fibonacci word.
   
   \begin{cor} \label{cor: Stuca} Fibonacci word is the unique characteristic Sturmian word $s$ whose directive word begins with $aba$, or equivalently $s$ begins with $abaa$,  such that its palindromic prefixes of any order have the maximal number of occurrences of the letter $b$.
   \end{cor}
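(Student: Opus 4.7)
The plan is to derive the corollary almost immediately from Theorem~\ref{thm: Stuca}, which already reduces the list of candidates to exactly two characteristic Sturmian words beginning with $a$, namely $f = \psi((ab)^{\omega})$ and $g = \psi(ab^2(ab)^{\omega})$. What remains is to verify the stated equivalence between a hypothesis on the directive word and a hypothesis on $s$ itself, and then to use it to single out $f$.

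First I would establish the equivalence ``the directive word $y$ of $s$ begins with $aba$'' $\iff$ ``$s$ begins with $abaa$'', under the standing assumption $y_1 = a$. Since $\psi(ab) = aba$, any such $s$ begins with $aba$, and the fourth letter is determined by $y_3$. A short computation using the definition of $\psi$ gives $\psi(aba) = (abaa)^{(+)} = abaaba$, so if $y_3 = a$ then $s$ has $abaa$ as a prefix. On the other hand $\psi(abb) = (abab)^{(+)} = ababa$, so if $y_3 = b$ then $s$ has $abab$ as a prefix and in particular does not begin with $abaa$. Since $y_3 \in \{a,b\}$, this proves the ``equivalently'' part of the statement.

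Next I apply Theorem~\ref{thm: Stuca}: any characteristic Sturmian word $s$ beginning with the letter $a$ whose palindromic prefixes of every order contain the maximal number of occurrences of $b$ must be either $f$ or $g$. The directive word of $f$ is $(ab)^{\omega}$, which does begin with $aba$; equivalently $f = abaababa\cdots$ does begin with $abaa$. The directive word of $g$ is $ab^2(ab)^{\omega}$, which begins with $abb$, not with $aba$; equivalently, by the computation above, $g$ begins with $ababa$, not with $abaa$. Hence the hypothesis of the corollary rules out $g$ and forces $s = f$.

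No serious obstacle is expected here: the entire content of the corollary beyond Theorem~\ref{thm: Stuca} is the distinction between the prefixes $aba$ and $abb$ of the directive word, translated via $\psi$ into the distinction between the prefixes $abaa$ and $abab$ of the word itself. The only point that requires care is to record both directions of the equivalence, so that the hypothesis ``$s$ begins with $abaa$'' can be used in place of the hypothesis on the directive word.
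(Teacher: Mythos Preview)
Your proposal is correct and follows exactly the route the paper indicates: the paper gives no separate proof of the corollary, simply stating that it is derived ``from the preceding theorem'' (Theorem~\ref{thm: Stuca}). Your argument supplies precisely the expected details---applying Theorem~\ref{thm: Stuca} to reduce to the two candidates $f$ and $g$, then observing that the directive word of $g$ begins with $abb$ rather than $aba$---and your verification of the equivalence between ``directive word begins with $aba$'' and ``$s$ begins with $abaa$'' via the computations $\psi(aba)=abaaba$ and $\psi(abb)=ababa$ is accurate.
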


\section{Arithmetization}\label{sec:five}

In this section we shall give an interpretation of the extremal properties satisfied by the palindromic prefixes of $f$
and $E(f)$ shown in the preceding section, in terms of continued fractions and more precisely of
continuants.

Any word $v\in {\cal A}^*$ can be uniquely represented as:
\[
	v=b^{\alpha_0}a^{\alpha_1}b^{\alpha_2}\cdots a^{\alpha_{m-1}}b^{\alpha_m},
\]
where $m$ is an even integer, $\alpha_i>0$, $i=1,\dots,m-1$, and $\alpha_0\geq 0$,
$\alpha_m\geq 0$.  We call the list $(\alpha_0, \alpha_1,\dots,\alpha_n)$, where $n=m$ if
$\alpha_m>0$ and $n=m-1$ otherwise, the
\emph{integral representation} of the word $v$.

  We can
identify the word $v$ with its integral representation and  write
$v\equiv(\alpha_0, \alpha_1,\dots,\alpha_n)$. One has:
$$ |v| = \sum_{i=0}^{n}|\alpha_i|.$$
For instance, the words $v_1= b^2aba^2$ and $v_2= a^3bab^2$ have the integral representations
$v_1\equiv (2,1,1,2)$ and $v_2\equiv (0,3,1,1,2)$.

If $v\in{\cal A}^{\omega}$ is the directive word of the characteristic word
$\psi(v)$, then  $v$ can be uniquely represented by 
$$v=b^{\alpha_0}a^{\alpha_1}b^{\alpha_2}\cdots ,$$
with $\alpha_0\geq 0$ and $\alpha_i>0$, $i>0$.
The infinite sequence $({\alpha_0},{\alpha_1}, {\alpha_2},\cdots, {\alpha_n}, \cdots)$ is called 
the {\em integral representation} of $v$. It has been proved in \cite{deluca} that if $\alpha_0=0$ then $({\alpha_1}, {\alpha_2},\cdots, {\alpha_n}, \cdots)$ coincides with the directive numerical sequence of the characteristic word $\psi(v)$. If $\alpha_0>0$, then the directive numerical sequence of $\psi(v)$
is $(0, \alpha_0, \alpha_1, \ldots, \alpha_n, \ldots)$.

The following important theorem holds (cf.\cite{BDL,BLRS}):
\begin{thm}\label{thm:cf} Let $w=aub$ be a proper Christoffel word with $u=\psi(v)$ and $(\alpha_0,\alpha_1,\ldots, \alpha_n)$, $n\geq 0$, be
the  integral  representation of  $v$. Then the slope $\eta(w)$ of $w$ is given by the continued fraction
$$[\alpha_0;\alpha_1, \ldots, \alpha_{n-1}, \alpha_n+1].$$
\end{thm}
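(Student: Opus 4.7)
The plan is to convert the slope into a $2\times 2$ matrix product in $SL_2(\mathbb{Z})$ and then identify that product with the standard matrix encoding of continued fraction convergents. First I would reduce $\eta(w)$ to a quantity computed by the morphism $\mu_v$. By Proposition \ref{prop:standStu}, $\mu_v(ab) = \psi(v)ab$, so $|\mu_v(ab)|_a = |w|_a$ and $|\mu_v(ab)|_b = |w|_b$, and therefore
\[
\eta(w) = \frac{|\mu_v(ab)|_b}{|\mu_v(ab)|_a}.
\]
Next I would pass to the abelianization. The morphisms $\mu_a$ and $\mu_b$ act on the letter-count vector $(|u|_a,|u|_b)^T$ by the matrices
\[
A = \begin{pmatrix} 1 & 1 \\ 0 & 1 \end{pmatrix}, \qquad B = \begin{pmatrix} 1 & 0 \\ 1 & 1 \end{pmatrix},
\]
and since $\mu_{v_1 v_2} = \mu_{v_1}\circ\mu_{v_2}$, composition of morphisms becomes multiplication of matrices. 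For $v = b^{\alpha_0} a^{\alpha_1} b^{\alpha_2}\cdots$ the matrix of $\mu_v$ is then $M_v = B^{\alpha_0} A^{\alpha_1} B^{\alpha_2}\cdots$, and $(|w|_a,|w|_b)^T = M_v(1,1)^T$.

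The core step is to show, by induction on $n$, that the ratio of the two entries of $M_v(1,1)^T$ equals the continued fraction $[\alpha_0;\alpha_1,\ldots,\alpha_{n-1},\alpha_n+1]$. In the base case $v = b^{\alpha_0}$ one has $M_v(1,1)^T = (1,\alpha_0+1)^T$, whose ratio is the integer $\alpha_0+1 = [\alpha_0+1]$. For the inductive step, I would use the elementary identities $B^k(p,q)^T = (p,kp+q)^T$ and $A^k(p,q)^T = (p+kq,q)^T$: if the trailing product $A^{\alpha_1}B^{\alpha_2}\cdots$ applied to $(1,1)^T$ yields a vector whose reciprocal ratio equals $[\alpha_1;\alpha_2,\ldots,\alpha_n+1]$, then left-multiplying by $B^{\alpha_0}$ realises exactly the continued fraction step $x\mapsto \alpha_0 + 1/x$. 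The symmetry between $a$-runs and $b$-runs is formally encoded by the involution $J = \left(\begin{smallmatrix} 0 & 1 \\ 1 & 0\end{smallmatrix}\right)$, which satisfies $JAJ = B$ and transposes the two coordinates, so that swapping the role of the two letters exactly inverts the ratio at each step of the induction.

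The main obstacle is bookkeeping the alternation of $A$- and $B$-factors along the integral representation and, in particular, accounting for the extra $+1$ attached to $\alpha_n$. Conceptually this $+1$ arises because the vector applied on the right is $(1,1)^T$ rather than $(1,0)^T$ or $(0,1)^T$: the innermost reciprocal in the continued fraction is initialised with $1/1 = 1$, which then gets added to the last partial quotient $\alpha_n$, precisely because $\mu_v$ is evaluated at the word $ab$, containing one copy of each letter. A clean way to package the recursion is to prove simultaneously that the two columns of $M_v$ are the numerator--denominator pairs of two consecutive convergents of $[\alpha_0;\alpha_1,\ldots,\alpha_n]$, so that summing them yields the convergent of $[\alpha_0;\alpha_1,\ldots,\alpha_n+1]$, which is the slope of $w$.
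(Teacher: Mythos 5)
Your proof is correct, but there is nothing in the paper to compare it against: Theorem~\ref{thm:cf} is stated with a pointer to \cite{BDL,BLRS} and no proof is given, so your argument serves as a self-contained justification of a quoted result, built only from tools the paper does supply. Every step checks out. The reduction $\eta(w)=|\mu_v(ab)|_b/|\mu_v(ab)|_a$ is legitimate, since Proposition~\ref{prop:standStu} gives $\mu_v(ab)=\psi(v)ab$, which has the same letter counts as $w=a\psi(v)b$; the incidence matrices $A,B$ of $\mu_a,\mu_b$ on $(|u|_a,|u|_b)^T$ are as you write, and the paper's convention $\mu_v=\mu_{x_1}\circ\cdots\circ\mu_{x_n}$ makes the order in $M_v=B^{\alpha_0}A^{\alpha_1}B^{\alpha_2}\cdots$ correct. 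Your induction closes cleanly: writing $M_v=B^{\alpha_0}M'$, conjugation by $J=\left(\begin{smallmatrix}0&1\\1&0\end{smallmatrix}\right)$ turns $M'$ into a product $B^{\alpha_1}A^{\alpha_2}\cdots$ of the same shape on $(\alpha_1,\ldots,\alpha_n)$, and since $J(1,1)^T=(1,1)^T$ the hypothesis transfers; if $M'(1,1)^T=(u_1,u_2)^T$ with $u_1/u_2=[\alpha_1;\alpha_2,\ldots,\alpha_n+1]$, then $B^{\alpha_0}(u_1,u_2)^T=(u_1,\alpha_0u_1+u_2)^T$ has slope $\alpha_0+u_2/u_1$, which is exactly the step $x\mapsto\alpha_0+1/x$, while the base case $B^{\alpha_0}(1,1)^T=(1,\alpha_0+1)^T$ reproduces the paper's remark that for $n=0$ the formula degenerates to $[\alpha_0+1]=[\alpha_0;1]$ (the case $\alpha_0=0$ for $n\geq 1$ is harmless, as $B^0$ is the identity). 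Your diagnosis of the terminal $+1$ is also the right one: the seed $(1,1)^T$, forced by evaluating $\mu_v$ at the word $ab$, initialises the innermost quotient at $\alpha_n+1$; equivalently, your closing packaging rests on the standard identity $\frac{A_n+A_{n-1}}{B_n+B_{n-1}}=[a_0;a_1,\ldots,a_n+1]$. Two remarks on what your route buys relative to the paper's framework: it is arguably more in the spirit of the paper than the cited sources, since it uses only Proposition~\ref{prop:standStu} and the morphisms $\mu_x$ already introduced; and as a bonus it proves Theorem~\ref{thm:cf000} at the same stroke, because the sum of the two entries of $M_v(1,1)^T$ is $|w|$, and the convergent recursion you establish for the columns of $M_v$ is precisely the continuant recursion (\ref{eq:cf02}) underlying the identity $|w|=K[\alpha_0+1,\alpha_1,\ldots,\alpha_{n-1},\alpha_n+1]$.
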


We remark that in the case $n=0$ the preceding formula becomes $[\alpha_0+1]$, or, equivalently,
$[\alpha_0; 1]$.

\begin{example}{\em Let $v=a^2b^2a$. One has $w= a^3ba^2ba^3ba^2ba^2b$ and $\eta(w)=[0;2,2,2]$=$\frac{5}{12}$. If $v=ba^2b$, then $w= abababbababb$ and $\eta(w)=[1;2,2]$=$\frac{7}{5}$. If $v=b^3$, then $w=ab^4$ and $\eta(w)= \frac{4}{1}= [4]= [3;1]$.}
\end{example}

Let $[a_0;a_1, \ldots, a_n]$ be a continued fraction. As is well known (see, for instance, \cite{K}), for any $0\leq k\leq n$, the $k$-order convergent
$C_k= [a_0;a_1,\ldots,a_k]$ is given by the ratio $\frac{A_k}{B_k}$, where $(A_k)_{k\geq -1}$, $(B_k)_{k\geq -1}$ is a bisequence defined by 
$$ A_{-1}=1, \ A_0=a_0, \ B_{-1}=0, \ B_0=1$$
and
$$A_{k+1}= a_{k+1}A_k+ A_{k-1}, \ B_{k+1}= a_{k+1}B_k+B_{k-1},$$
for $0\leq k \leq n-1$. For any $k\geq 0$ the fraction $\frac{A_k}{B_k}$
is irreducible.

Let us now set for any $k\geq -1$,
$$ P_k = A_k+ B_k.$$
One has that  $P_{-1} =1$, $P_0= a_0+1$, and 
\begin{equation}\label{eq:cf01}
 P_{k+1}= a_{k+1}P_k+P_{k-1}, \ \ \mbox{for} \ k\geq 0.
 \end{equation}
The value of $P_n$ for $n\geq 0$ can be expressed in terms of {\em continuants} (cf.~\cite{GKP}) (called {\em cumulants} in \cite{EL}). Let  $a_0,a_1, \ldots, a_n, \ldots$ be any sequence of 
numbers. The $n$-th continuant  $K[a_0,\ldots, a_n]$ is defined recursively as: $K[ \ \ ]= 1$,
$K[a_0]= a_0$,
 and for $n\geq 1$,
\begin{equation}\label{eq:cf02}
 K[a_0,a_1,\ldots, a_n] = a_nK[a_0,a_1,\ldots, a_{n-1}]+ K[a_0,a_1,\ldots, a_{n-2}].
 \end{equation}
As it is ready verified for any $n\geq 0$, $K[a_0,a_1,\ldots, a_n]$ is a multivariate polynomial
in the variables $a_0, a_1, \ldots, a_n$ which is obtained by starting with the product $a_0a_1\cdots a_n$ and then striking out adjacent pairs  $a_ka_{k+1}$ in all possible ways. For instance, $K[a_0,a_1,a_2,a_3,a_4]= a_0a_1a_2a_3a_4+ a_2a_3a_4+a_0a_3a_4+a_0a_1a_4+a_0a_1a_2+a_0+a_2+a_4.$

We recall (cf.~\cite{GKP,EL})  that for every $n\geq 0$,
\begin{equation}\label{eq:reverse} 
K[a_0,\ldots, a_n]= K[a_n, \ldots, a_0],
\end{equation}
 i.e., a continuant does not change its value by reversing the order of its elements; moreover, one has
$K[1^n] = F_{n-1}$, where we have denoted by $1^n$ the sequence of length $n$, $(1,1, \ldots, 1)$. 
A further property that we shall use in the following,  is:
\begin{equation}\label{eq:piuno}
K[a_0,\ldots, a_n, 1] = K[a_0,\ldots, a_{n-1}, a_n+1].
\end{equation}
There exists a strong relation between continued fractions and continuants. More precisely the
following holds. Let $[a_0;a_1,\ldots, a_n]$ be any continued fraction. Then
\begin{equation}\label{eq:cfcont}
[a_0;a_1,\ldots, a_n]= \frac{K[a_0,a_1,\ldots, a_{n}]}{K[a_1,\ldots, a_{n}]}
\end{equation}
Indeed, as is ready verified, $K[a_0,a_1,\ldots, a_{n}]= A_n$ and $K[a_1,\ldots, a_{n}]=B_n$.

From (\ref{eq:cf01}) and (\ref{eq:cf02}), or using the preceding properties of continuants, one derives that if  $[a_0;a_1, \ldots, a_n]$ is a continued fraction,  then for any $n\geq 0$,
\begin{equation}\label{eq:cf3}
P_n = A_n+B_n=  K[a_0+1, a_1,\ldots, a_n].
\end{equation}
The following holds:

\begin{thm}\label{thm:cf000} Let $w=aub$ be a proper Christoffel word with $u=\psi(v)$ and $(\alpha_0,\alpha_1,\ldots, \alpha_n)$, $n\geq 0$, be
the  integral representation of  $v$. Then 
$$ |w|= K[\alpha_0+1,\alpha_1,\ldots, \alpha_{n-1}, \alpha_n+1].$$
\end{thm}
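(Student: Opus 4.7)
The plan is to read off $|w|$ from the rational number $\eta(w)$, whose continued fraction expansion is already given by Theorem \ref{thm:cf}, and then recognize the sum of numerator and denominator as a continuant via identity (\ref{eq:cf3}). The key observation is that the slope $\eta(w) = |w|_b/|w|_a$ is automatically in lowest terms, since $w = aub$ is a proper Christoffel word and so $\gcd(|w|_a,|w|_b)=1$; this lets me identify $|w|_b$ and $|w|_a$ with the numerator $A_n$ and the denominator $B_n$ of the convergent of the continued fraction.

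More precisely, I would first apply Theorem \ref{thm:cf} to write
\[
\eta(w) \;=\; [\alpha_0;\alpha_1,\ldots,\alpha_{n-1},\alpha_n+1].
\]
Then, using (\ref{eq:cfcont}) together with the fact that $K[a_0,\ldots,a_n]=A_n$ and $K[a_1,\ldots,a_n]=B_n$ (specialized to $a_i=\alpha_i$ for $i<n$ and $a_n=\alpha_n+1$), I would obtain
\[
\eta(w)\;=\;\frac{K[\alpha_0,\alpha_1,\ldots,\alpha_{n-1},\alpha_n+1]}{K[\alpha_1,\ldots,\alpha_{n-1},\alpha_n+1]}.
\]
Since the right-hand side is a convergent of a continued fraction, its numerator and denominator are coprime; combined with $\gcd(|w|_a,|w|_b)=1$, this forces
\[
|w|_b \;=\; K[\alpha_0,\alpha_1,\ldots,\alpha_{n-1},\alpha_n+1],\qquad |w|_a \;=\; K[\alpha_1,\ldots,\alpha_{n-1},\alpha_n+1].
\]

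Finally, summing gives $|w|=|w|_a+|w|_b = A_n+B_n = P_n$, and (\ref{eq:cf3}) applied to the sequence $(\alpha_0,\alpha_1,\ldots,\alpha_{n-1},\alpha_n+1)$ yields exactly
\[
P_n \;=\; K[\alpha_0+1,\alpha_1,\ldots,\alpha_{n-1},\alpha_n+1],
\]
which is the claimed identity. The only subtlety I anticipate is the degenerate case $n=0$, i.e.\ $v=b^{\alpha_0}$, in which one has $w=ab^{\alpha_0+1}$, $|w|=\alpha_0+2$; here the continued fraction reduces to $[\alpha_0+1]$ by the remark following Theorem \ref{thm:cf}, and using $K[a,1]=K[a+1]$ (equation (\ref{eq:piuno})) together with the reversal identity (\ref{eq:reverse}) gives $K[\alpha_0+1,1]=\alpha_0+2$, so the same formula holds. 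No other step poses a real difficulty: the main content is the coprimality argument that pins down $|w|_a$ and $|w|_b$ individually as continuants, after which identity (\ref{eq:cf3}) does all the work.
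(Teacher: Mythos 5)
Your proof is correct and follows essentially the same route as the paper's: apply Theorem \ref{thm:cf}, use irreducibility of the convergent together with $\gcd(|w|_a,|w|_b)=1$ to identify $|w|_b=A_n$ and $|w|_a=B_n$, and conclude via $|w|=P_n$ and identity (\ref{eq:cf3}). Your explicit unfolding through (\ref{eq:cfcont}) and the separate check of the case $n=0$ are just slightly more detailed renderings of the same argument.
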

We remark that for $n=0$ the preceding formula becomes $K[\alpha_0+1,1]= K[\alpha_0+2]$.
\begin{proof} By Theorem \ref{thm:cf}, the slope $\frac{|w|_b}{|w|_a}$ of $w$ is given by the continued fraction 
$[\alpha_0;\alpha_1, \ldots, \alpha_{n-1}, \alpha_n+1].$ Since  the $n$-th order convergent $C'_n= \frac{A'_n}{B'_n}=\frac{|w|_b}{|w|_a}$ and $\gcd(|w|_a,|w|_b)=1$,
one has $P'_n= A'_n+B'_n = |w|_b+|w|_a= |w|$. Then the result follows from (\ref{eq:cf3}).
\end{proof}

Theorem \ref{thm:main00}   and Corollary \ref{cor:main}  can be restated equivalently in terms of continuants as follows:

\begin{thm}\label{thm:cf0101} Let $n\geq 0$ and $\alpha_0, \alpha_1, \ldots, \alpha_m$ be any sequence of integers such that 
$$\alpha_0\geq 0,  \ \alpha_i>0,  \ i=1,\ldots, m,\ \  \mbox{and} \  \  \sum_{i=0}^m \alpha_i= n.$$ Then
\begin{equation}\label{eq:cappa}
  K[\alpha_0+1,\alpha_1,\ldots, \alpha_{m-1}, \alpha_m+1] \leq K[1^n, 2] = K[2, 1^n] =F_{n+1},
\end{equation}
where the equality occurs if and only if $m=n$ and $\alpha_0=0, \alpha_1=\alpha_2= \cdots =\alpha_n=1$ or $m=n-1$ and  $\alpha_0= \alpha_1=\alpha_2= \cdots = \alpha_{n-1}=1$.
\end{thm}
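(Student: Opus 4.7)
The plan is to reduce this continuant statement to the word-theoretic Corollary~\ref{cor:main} via the dictionary between the integral representation of a word and the length of its associated Christoffel word supplied by Theorem~\ref{thm:cf000}.

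Given any sequence $(\alpha_0, \alpha_1, \ldots, \alpha_m)$ satisfying the hypotheses, let $v \in {\cal A}^*$ be the unique word whose integral representation it is; by construction $|v| = n$. Applying Theorem~\ref{thm:cf000} to the proper Christoffel word $w = a\psi(v)b$ yields
$$K[\alpha_0+1, \alpha_1, \ldots, \alpha_{m-1}, \alpha_m+1] = |w| = |\psi(v)| + 2.$$
By Corollary~\ref{cor:main}, $|\psi(v)| \leq F_{n+1}-2$ with equality if and only if $v \in \{v^{(n)}, E(v^{(n)})\}$, which gives the main inequality at once. Translating the equality cases back to integral representations: $v^{(n)} = aba\cdots$ has representation $(0, 1, 1, \ldots, 1)$ with $m = n$, matching the first case of the theorem, while $E(v^{(n)}) = bab\cdots$ has representation $(1, 1, \ldots, 1)$ with $m = n-1$, matching the second case, with the edge cases $n \leq 1$ being immediate. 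Finally, the numerical identity $K[1^n, 2] = K[2, 1^n] = F_{n+1}$ follows from~(\ref{eq:piuno}), which rewrites $K[1^n, 2]$ as $K[1^{n+2}] = F_{n+1}$, together with the reversal property~(\ref{eq:reverse}).

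The main point of interest, explicitly announced in the paragraph preceding the theorem, is to also give a self-contained, purely continuant-theoretic argument bypassing Corollary~\ref{cor:main}. Such a direct proof would proceed by induction on $n$, repeatedly applying the continuant recursion~(\ref{eq:cf02}) and the identity~(\ref{eq:piuno}) to show that `splitting' any entry $\geq 2$ of the continuant into two smaller positive entries never decreases its value, forcing the maximum at an all-ones pattern. The delicate step there is the uniqueness claim: several distinct sequences (such as $(1^n, 2)$, $(2, 1^{n-2}, 2)$, and their reverses) all evaluate to $F_{n+1}$, so one must argue carefully that exactly the two listed arise from a valid integral representation of a word of length $n$.
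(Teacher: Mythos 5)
Your proposal is correct and is essentially the paper's own proof: the paper likewise applies Theorem~\ref{thm:cf000} to $w=a\psi(v)b$ for the word $v$ with the given integral representation, invokes Corollary~\ref{cor:main} for both the inequality and the equality cases, and translates $v^{(n)}$ and $E(v^{(n)})$ into the representations $(0,1,\ldots,1)$ with $m=n$ and $(1,\ldots,1)$ with $m=n-1$. The direct continuant argument you only sketch at the end is in fact carried out in the paper as a separate ``second proof'' by induction on $n$ using Lemma~\ref{lem:fib00}, including the delicate equality analysis you correctly flag, but since your complete argument is the reduction, it coincides with the paper's primary proof.
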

\begin{proof} Let $v$ be any word of ${\cal A}^n$ having the integral representation
$v\equiv (\alpha_0, \alpha_1, \ldots, \alpha_m)$ such that
$n= \sum_{i=0}^m \alpha_i$. By  Theorem \ref{thm:main00}   and Corollary \ref{cor:main} one has
$$ |\psi(v)| \leq |\psi(v^{(n)})|= F_{n+1}-2,$$
so that $|a\psi(v)b| \leq F_{n+1}$. Thus, by Theorem \ref{thm:cf000} one derives
$$ |a\psi(v)b|= K[\alpha_0+1,\alpha_1,\ldots, \alpha_{m-1}, \alpha_m+1] \leq F_{n+1}= K[1^n, 2]=K[2, 1^n].$$
By Corollary \ref{cor:main} the equality occurs if and only if $v= v^{(n)}$ or  $v= E(v^{(n)})$. In the first case $m=n$ and $\alpha_0=0, \alpha_1=\alpha_2= \cdots =\alpha_n=1$. In the second case
$m=n-1$,  and $\alpha_0= \alpha_1=\alpha_2= \cdots =\alpha_{n-1}=1$. Hence, the theorem
is proved.
\end{proof}

Let us observe that the preceding theorem implies the validity of Theorem \ref{thm:main00}   and Corollary \ref{cor:main}. Indeed,
let $v$ be any word over ${\cal A}$ having the integral representation
$v\equiv (\alpha_0, \alpha_1, \ldots, \alpha_m)$ and length $n= \sum_{i=0}^m \alpha_i$. From
(\ref{eq:cappa}) and Theorem \ref{thm:cf000} one derives $|\psi(v)|\leq |\psi(v^{(n)})|= F_{n+1}-2,$
where the equality holds if and only if  $v= v^{(n)}$ or  $v= E(v^{(n)})$. 

We shall  give now  a direct proof of Theorem \ref{thm:cf0101} without using combinatorics on words.
We need the following lemma on Fibonacci numbers.

\begin{lemma}\label{lem:fib00} Let $n\geq 1$. For any  integer  $x$ such that  $0< x \leq n$, one has:
$$xF_{n-x}+ F_{n-x+1}\leq F_{n+1},$$
where the equality holds if and only if  $x=1$.
\end{lemma}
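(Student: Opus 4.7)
The plan is to view the left-hand side as a function of $x$ for fixed $n$ and show it is strictly decreasing on $\{1,2,\ldots,n\}$. Set $g(x) = xF_{n-x} + F_{n-x+1}$ for $1 \leq x \leq n$. The base case $x=1$ gives $g(1) = F_{n-1} + F_n = F_{n+1}$ directly from the Fibonacci recurrence, so equality holds at $x=1$, and this will be the unique case of equality once strict monotonicity is established.

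The core step is to compute the first difference $g(x+1) - g(x)$. Expanding the definition and applying the recurrence $F_{n-x+1} = F_{n-x} + F_{n-x-1}$ once to eliminate the $F_{n-x+1}$ term, the terms containing $F_{n-x}$ collapse and the remainder is $x(F_{n-x-1} - F_{n-x})$; a second application of the recurrence, in the form $F_{n-x} - F_{n-x-1} = F_{n-x-2}$, then yields
$$g(x+1) - g(x) = -xF_{n-x-2}.$$
For $1 \leq x \leq n-1$ the index $n-x-2$ lies in $\{-1, 0, \ldots, n-3\}$, on which all Fibonacci numbers are strictly positive thanks to the convention $F_{-1}=F_0=1$ fixed in the paper. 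Hence $g(x+1) < g(x)$ throughout this range, so $g$ is strictly decreasing on $\{1,2,\ldots,n\}$.

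Combining the base value with monotonicity, $g(x) \leq g(1) = F_{n+1}$ with equality if and only if $x=1$, which is exactly the claim. I do not anticipate any real obstacle: the only subtlety is index bookkeeping so that every Fibonacci term invoked is defined and positive, and this is handled by the convention $F_{-1}=F_0=1$. (As a sanity check, the endpoint $x=n$ gives $g(n) = n + 2$, clearly below $F_{n+1}$ for $n\geq 2$, while for $n=1$ the only admissible value is $x=1$ and the inequality is an equality, as claimed.)
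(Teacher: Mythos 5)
Your proof is correct and takes essentially the same route as the paper: the paper also proves the bound by showing the left-hand side strictly decreases as $x$ grows, phrased there as an induction on $x$ whose step uses the estimate $F_{n-x}<F_{n-x+1}$, with equality only at the base case $x=1$. Your exact first-difference identity $g(x+1)-g(x)=-xF_{n-x-2}$ is just a sharper packaging of that inductive step, and your index bookkeeping (all indices at least $-1$, where $F_{-1}=F_0=1>0$) is sound.
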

\begin{proof} The proof is by induction on the value of $x\leq n$.  For $x=1$ one has $F_{n-1}+F_n= F_{n+1}$. For $x=2\leq n$ one has
$2F_{n-2}+F_{n-1}= F_{n-2}+F_{n-2}+F_{n-1}= F_{n-2}+F_n < F_{n-1}+F_n= F_{n+1}$. Suppose the statement true up to $1<x-1<n$ and prove it for $x$. One has by using the inductive hypothesis,
$$xF_{n-x}+F_{n-x+1} = (x-1)F_{n-x}+ F_{n-x}+ F_{n-x+1}= (x-1)F_{n-x}+F_{n-x+2} $$
$$< (x-1)F_{n-x+1}+F_{n-x+2} <F_{n+1}. \qedhere$$  \end{proof}

\vspace{3mm}

\noindent
{\em (Second proof of Theorem \ref{thm:cf0101})}.
The proof is by induction on the integer $n$. The result is trivial if $n\leq 1$. Let us suppose the result true for all integers less than $n>1$ and prove it for $n$. Let $\alpha_0, \alpha_1, \ldots, \alpha_m$ be any sequence of integers such that 
$\alpha_0\geq 0,  \ \alpha_i>0,  \ i=1,\ldots, m, \ \  \mbox{and} \  \  \sum_{i=0}^m \alpha_i= n$. From the definition of continuant one has:
$$   K[\alpha_0+1,\alpha_1,\ldots, \alpha_{m-1}, \alpha_m+1]= (\alpha_m+1)  K[\alpha_0+1,\alpha_1,\ldots, \alpha_{m-1}] $$ $$ + K[\alpha_0+1,\alpha_1,\ldots, \alpha_{m-2}].$$
By induction one  derives:
\begin{equation}\label{eq:zeroemme1}
K[\alpha_0+1,\alpha_1,\ldots, \alpha_{m-1}]\leq F_{n-\alpha_m}. 
\end{equation}
Indeed, if $\alpha_{m-1}>1$ one has $$K[\alpha_0+1,\alpha_1,\ldots, \alpha_{m-1}] =K[\alpha_0+1,\alpha_1,\ldots, \alpha_{m-2}, (\alpha_{m-1}-1)+1].$$
Since, $\sum_{i=0}^{m-2} \alpha_i +(\alpha_{m-1}-1) = n-\alpha_m-1$, equation (\ref{eq:zeroemme1}) follows by induction. If $\alpha_{m-1}=1$, by (\ref{eq:piuno}), one has  
$$K[\alpha_0+1,\alpha_1,\ldots, \alpha_{m-2}, 1]= K[\alpha_0+1,\alpha_1,\ldots, \alpha_{m-2}+ 1].$$
Since $\sum_{i=0}^{m-2} \alpha_i = n-\alpha_m-1$, equation (\ref{eq:zeroemme1}) follows again by induction.
In a similar way one derives by induction
\begin{equation}\label{eq:zeroemme2}
K[\alpha_0+1,\alpha_1,\ldots, \alpha_{m-2}]\leq F_{n-\alpha_m -\alpha_{m-1}}. 
\end{equation}
Thus, since $\alpha_{m-1}\geq 1$, one has:
$$   K[\alpha_0+1,\alpha_1,\ldots, \alpha_{m-1}, \alpha_m+1]\leq (\alpha_m+1)F_{n-\alpha_m} +
 F_{n-\alpha_m -\alpha_{m-1}}$$
 $$= \alpha_m F_{n-\alpha_m}+F_{n-\alpha_m}+F_{n-\alpha_m -\alpha_{m-1}}\leq  \alpha_m F_{n-\alpha_m}+   F_{n-\alpha_m+1},$$
 where in the last inequality the equality sign occurs if and only if $\alpha_{m-1}=1$.
 By Lemma \ref{lem:fib00},  $\alpha_m F_{n-\alpha_m}+   F_{n-\alpha_m+1} \leq F_{n+1}$, where the equality holds if and only if $\alpha_m= 1$. Thus in any case
 $$  K[\alpha_0+1,\alpha_1,\ldots, \alpha_{m-1}, \alpha_m+1]\leq F_{n+1}. $$
The equality can occur in the preceding equation if and only if $\alpha_m= \alpha_{m-1}=1$ and, moreover, in view of (\ref{eq:zeroemme1}) and (\ref{eq:zeroemme2}),
 $$K[\alpha_0+1,\alpha_1,\ldots, \alpha_{m-2}+1] = F_{n-1} \ \mbox{and} \  K[\alpha_0+1,\alpha_1,\ldots, \alpha_{m-2}] = F_{n-2}.$$
 Since $\sum_{i=0}^{m-2}\alpha_i = n-2$, by induction the first of the two preceding equations is satisfied if and
 only if $\alpha_0=0$, $m=n$, and $\alpha_1= \cdots =\alpha_{n-2}=1$ or $\alpha_0=1$, $m=n-1$,
 and $\alpha_1= \cdots = \alpha_{n-3}= 1$. In the first case $\alpha_{m-1}=\alpha_{n-1}=\alpha_n= \alpha_m=1 $, and  in the second case, $\alpha_{m-1}=\alpha_{n-2}=\alpha_m= \alpha_{n-1}=1$.
 Since for the previous values of $\alpha$'s the second equation is certainly satisfied, the result follows. $\Box$
 
 \begin{prop}\label{prop:ultima0} Let $v\in {\cal A}^*$ be a word having the integral representation $v=(\alpha_0,\alpha_1,\ldots, \alpha_n)$. Then
 $$ \pi(\psi(v)) = K[\alpha_0+1, \alpha_1, \ldots, \alpha_{n-1}].$$
 \end{prop}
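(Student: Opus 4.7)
The plan is to strengthen the statement by tracking both periods: I will prove by induction on $|v|$ that if $v$ has integral representation $(\alpha_0,\ldots,\alpha_n)$ and ends with the letter $x$, then $p_x(v) = K[\alpha_0+1,\alpha_1,\ldots,\alpha_{n-1}]$ and $p_{\bar x}(v) = K[\alpha_0+1,\alpha_1,\ldots,\alpha_n]$. The proposition then follows as soon as one observes that $\pi(\psi(v))$ equals the period corresponding to the trailing letter of $v$. That observation is itself a short induction on $|v|$ using the update rule below: appending $\bar y$ to a word ending in $y$ leaves $p_{\bar y}$ unchanged while enlarging $p_y$ to $p_y + p_{\bar y}$, so the trailing-letter period is always the smaller one, hence equals $\pi(\psi(v))$ by \eqref{eq:perpsi}.

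The main tool is the update rule $p_y(v'y) = p_y(v')$ and $p_{\bar y}(v'y) = p_y(v') + p_{\bar y}(v')$, which follows immediately from $p_y(v) = |\mu_v(y)|$, from $\mu_{v'y} = \mu_{v'} \circ \mu_y$, and from $\mu_y(y) = y$, $\mu_y(\bar y) = y\bar y$. Besides the defining recursion \eqref{eq:cf02} for continuants I will use its two consequences $K[c_0,\ldots,c_{m-1},c_m] = K[c_0,\ldots,c_{m-1},c_m-1] + K[c_0,\ldots,c_{m-1}]$ (valid for $c_m \geq 1$) and its limiting form $K[c_0,\ldots,c_{m-1},1] = K[c_0,\ldots,c_{m-1}] + K[c_0,\ldots,c_{m-2}]$, both of which follow directly from \eqref{eq:cf02}.

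Base cases $v\in\{\varepsilon,\ b^{\alpha_0},\ a^{\alpha_1}\}$ are handled by direct computation of $\mu_v(x)$, interpreting the empty continuant as $1$. For the inductive step I write $v = v^{-}y$ and split according to whether $\alpha_n \geq 2$ or $\alpha_n = 1$. In the first case, $v^-$ still ends in $y$ with integral representation $(\alpha_0,\ldots,\alpha_n-1)$; the induction hypothesis supplies continuant expressions for $p_y(v^-)$ and $p_{\bar y}(v^-)$, and the update rule combined with the first identity above yields the target formulas for $v$. In the second case $\alpha_n = 1$, so $v^-$ has representation $(\alpha_0,\ldots,\alpha_{n-1})$ and ends in $\bar y$; the roles of the two periods in the induction hypothesis therefore switch, and the limiting form of the identity closes the induction.

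The main hurdle is the bookkeeping across the two subcases, since the length of the integral representation decreases only when $\alpha_n = 1$ and the $x$ versus $\bar x$ labelling swaps simultaneously. Once this is carefully set up, everything reduces to two applications of the continuant identities mentioned above; no combinatorial input beyond the update rule for $\mu_v$ is needed.
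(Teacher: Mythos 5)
Your proof is correct, and it takes a genuinely different route from the paper's. The paper's proof is citation-based: it quotes from \cite{deluca} (see also \cite{BDL,CdL}) the fact that the ratio $\pi(\psi(v))/q$ of the two coprime periods of $\psi(v)$ equals the continued fraction $[0;\alpha_n,\ldots,\alpha_1,\alpha_0+1]$, converts this to a quotient of continuants via \eqref{eq:cfcont}, uses irreducibility to read off the numerator, and finishes with the reversal property \eqref{eq:reverse} together with the degenerate recursion step $K[\alpha_0+1,\ldots,\alpha_{n-1},\alpha_n,0]=K[\alpha_0+1,\ldots,\alpha_{n-1}]$. You instead prove, by a self-contained induction on $|v|$ driven by the update rule $p_y(v'y)=p_y(v')$, $p_{\bar y}(v'y)=p_y(v')+p_{\bar y}(v')$ (which is indeed immediate from $\mu_{v'y}=\mu_{v'}\circ\mu_y$ and \eqref{eq:endo}), the stronger statement that \emph{both} periods are continuants: $p_x(v)=K[\alpha_0+1,\ldots,\alpha_{n-1}]$ and $p_{\bar x}(v)=K[\alpha_0+1,\ldots,\alpha_n]$ when $v$ ends with $x$. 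I checked the two subcases: for $\alpha_n\geq 2$ the representation of $v^-$ is $(\alpha_0,\ldots,\alpha_n-1)$ and your identity $K[c_0,\ldots,c_m]=K[c_0,\ldots,c_m-1]+K[c_0,\ldots,c_{m-1}]$ closes the step; for $\alpha_n=1$ the representation shortens, the letters swap, and the recursion \eqref{eq:cf02} with last entry $1$ does the job; the boundary cases $b^{\alpha_0}$ and $a^{\alpha_1}$ work out with $K[\;]=1$. Your auxiliary observation is also sound: appending $z$ fixes $p_z$ and strictly enlarges $p_{\bar z}$, so the trailing-letter period is strictly the smaller one for $v\neq\varepsilon$, and \eqref{eq:perpsi} identifies it with $\pi(\psi(v))$. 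What your route buys: it needs no appeal to the cited continued-fraction expansion of the period ratio, it pinpoints \emph{which} of $p_a(v),p_b(v)$ is the minimal period (the one indexed by the last letter of $v$), and it re-derives Theorem \ref{thm:cf000} as a by-product, since $|a\psi(v)b|=p_x(v)+p_{\bar x}(v)=K[\alpha_0+1,\ldots,\alpha_{n-1},\alpha_n+1]$ by your first identity. What the paper's route buys: brevity, and its use of the reversal symmetry \eqref{eq:reverse} ties the statement to the $v$ versus $v^{\sim}$ phenomena exploited elsewhere (statement 3 of Theorem \ref{thm:Lynd}).
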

 \begin{proof} It has been proved in \cite{deluca} (see also \cite{BDL,CdL}) that if $v$ has the
 integral representation $v=(\alpha_0,\alpha_1,\ldots, \alpha_n)$, then 
  $$ [0; \alpha_n, \alpha_{n-1}, \ldots, \alpha_1, \alpha_0+1] = \frac{\pi(\psi(v))}{q},$$
  where $\pi(\psi(v))$ is the minimal period of $\psi(v)$ and $q$ is the period of $\psi(v)$ such that $\gcd (q, \pi(\psi(v)))=1$ and $|\psi(v)|= \pi(\psi(v))+q -2$. By (\ref{eq:cfcont}) one has:
   $$ [0; \alpha_n, \alpha_{n-1}, \ldots, \alpha_1, \alpha_0+1] =  \frac{K[0,\alpha_n,\ldots, \alpha_1, \alpha_0+1]}{K[\alpha_n,\ldots, \alpha_1, \alpha_{0}+1]}.$$
   Since the preceding fraction is irreducible, by (\ref{eq:reverse}) and (\ref{eq:cf02}) one derives:
    $$ \pi(\psi(v)) = K[0, \alpha_n, \ldots, \alpha_1, \alpha_0+1] =K[\alpha_0+1, \alpha_1, \ldots, \alpha_{n-1}, \alpha_n, 0]$$
    $$ = K[\alpha_0+1, \alpha_1, \ldots, \alpha_{n-1}],$$
    which concludes the proof.
 \end{proof}
 
 By the preceding proposition and the extremal property of continuants expressed by Theorem \ref{thm:cf0101},  we can give a different proof of Theorem \ref{thm:pmain}. Indeed, the following
 proposition holds:
 
 \begin{prop}\label{prop:ultima1} Let $n\geq 0$ and $\alpha_0, \alpha_1, \ldots, \alpha_m$ be any sequence of integers such that 
$$\alpha_0\geq 0,  \ \alpha_i>0,  \ i=1,\ldots, m,\ \  \mbox{and} \  \  \sum_{i=0}^m \alpha_i= n.$$ 
One has that
\begin{equation}\label{eq:ultimo}
  K[\alpha_0+1,\alpha_1,\ldots, \alpha_{m-1}] \leq F_{n-1}.
  \end{equation}
  The equality is reached if and only if one of the following conditions is satisfied:
 
 \vspace{2 mm}
 
 \noindent
 1) $\alpha_0=0$, $m=n$, and $\alpha_1=\alpha_2= \cdots =\alpha_{n-1}=\alpha_n=1$, 
 
 \noindent
 2) $\alpha_0=0$, $m=n-1$, and $\alpha_i=1$ for $ 1\leq i \leq n-3$, $\alpha_{n-2}=2$, $\alpha_{n-1}=1$,
 
 \noindent
 3) $\alpha_0=1$, $m=n-1$, and $\alpha_1=\alpha_2= \cdots =\alpha_{n-1}=1$,
 
 \noindent
 4) $\alpha_0=1$, $m=n-2$, and $\alpha_i=1$ for $ 1\leq i \leq n-4$, $\alpha_{n-3}=2$,  $\alpha_{n-2}=1$.
 \end{prop}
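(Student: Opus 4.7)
The plan is to read Proposition \ref{prop:ultima1} as a direct translation of Theorem \ref{thm:pmain} into the language of continuants, with Proposition \ref{prop:ultima0} supplying the dictionary between the two languages.

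First I would associate to any admissible sequence $(\alpha_0, \ldots, \alpha_m)$ the unique word $v \in \mathcal{A}^*$ having this integral representation; since $|v| = \sum_{i=0}^m \alpha_i = n$, we have $v \in \mathcal{A}^n$. By Proposition \ref{prop:ultima0} one has
$$K[\alpha_0+1, \alpha_1, \ldots, \alpha_{m-1}] = \pi(\psi(v)),$$
and by Theorem \ref{thm:pmain} the right-hand side is bounded by $F_{n-1}$. This already proves inequality (\ref{eq:ultimo}).

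For the equality case, Theorem \ref{thm:pmain} also tells us that $\pi(\psi(v)) = F_{n-1}$ holds exactly when $v$ belongs to the set $\{v^{(n)}, E(v^{(n)}), c(v^{(n)}), E(c(v^{(n)}))\}$. It therefore suffices to compute the integral representation of each of these four directive words and verify that the resulting four patterns match conditions 1)--4) one by one. The word $v^{(n)}$ is the alternating prefix $abab\cdots$ of length $n$, whose integral representation is $(0,1,\ldots,1)$ with $m=n$, which is condition 1); the word $E(v^{(n)}) = baba\cdots$ has integral representation $(1,1,\ldots,1)$ with $m=n-1$, which is condition 3); the word $c(v^{(n)})$ is $v^{(n)}$ with its last two letters transposed, and in both parities of $n$ one checks that its integral representation is $(0,1,\ldots,1,2,1)$ with $m=n-1$, $\alpha_{n-2}=2$, $\alpha_{n-1}=1$, which is condition 2); finally applying $E$ to $c(v^{(n)})$ shifts $\alpha_0$ from $0$ to $1$ and absorbs the trailing $a$-block, producing $(1,1,\ldots,1,2,1)$ with $m=n-2$, which is condition 4).

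The only genuine bookkeeping lies in the parity case split for $c(v^{(n)})$ and $E(c(v^{(n)}))$: since $v^{(n)}$ ends in $b$ for $n$ even and in $a$ for $n$ odd, the pair of letters being swapped and the block structure before simplification look superficially different, yet both parities collapse to the same integral representation. The small cases $c(abab) = abba$ and $c(ababa) = abaab$ exhibit the phenomenon transparently and suggest the inductive shape in general. Once this elementary verification is carried out, the bijection between the four extremal words of Theorem \ref{thm:pmain} and conditions 1)--4) is established, and Proposition \ref{prop:ultima1} follows.
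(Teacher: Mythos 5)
Your proof is correct as a proof of the proposition, but it runs the logic in the opposite direction from the paper's. You pull the statement back to the word side via Proposition \ref{prop:ultima0}, quote Theorem \ref{thm:pmain} both for the bound $\pi(\psi(v))\leq F_{n-1}$ and for its equality set $\{v^{(n)}, E(v^{(n)}), c(v^{(n)}), E(c(v^{(n)}))\}$, and then match the integral representations of these four words against conditions 1)--4). The paper never invokes Theorem \ref{thm:pmain} here: it stays entirely on the continuant side, splitting on $\alpha_{m-1}=1$ versus $\alpha_{m-1}>1$ and using the identity $K[a_0,\ldots,a_n,1]=K[a_0,\ldots,a_{n-1},a_n+1]$ to rewrite $K[\alpha_0+1,\alpha_1,\ldots,\alpha_{m-1}]$ in the shape of Theorem \ref{thm:cf0101} with total weight $n-1-\alpha_m$, obtaining $K[\alpha_0+1,\ldots,\alpha_{m-1}]\leq F_{n-\alpha_m}\leq F_{n-1}$ and reading the equality conditions off the equality case of that theorem. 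The distinction matters because of the sentence immediately following the proposition: Propositions \ref{prop:ultima0} and \ref{prop:ultima1} are meant to yield a \emph{new}, purely arithmetic proof of Theorem \ref{thm:pmain}, resting on the direct second proof of Theorem \ref{thm:cf0101} via Lemma \ref{lem:fib00}; under your derivation that re-proof of Theorem \ref{thm:pmain} becomes circular. As a standalone proof of the proposition you are not circular, since Theorem \ref{thm:pmain} is established independently in Section \ref{sec:four}, and your route has the genuine merit of explaining conditions 1)--4) conceptually, as precisely the integral representations of the four extremal directive words; but it forfeits the point of the arithmetization, which the paper's self-contained continuant argument preserves.

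One concrete slip in your bookkeeping, which you in fact share with the statement itself: for $n=3$ one has $c(v^{(3)})=aab\equiv(0,2,1)$ and $E(c(v^{(3)}))=bba\equiv(2,1)$, so applying $E$ raises $\alpha_0$ to $2$, not $1$, contradicting your blanket claim that $E(c(v^{(n)}))$ always produces $(1,1,\ldots,1,2,1)$. The sequence $(2,1)$ attains equality, since $K[\alpha_0+1]=K[3]=3=F_2$, yet it satisfies none of conditions 1)--4) as literally written (the paper's proof has the same blind spot when it asserts that equality forces $\alpha_0\in\{0,1\}$). Your computed representations of $c(v^{(n)})$ and $E(c(v^{(n)}))$ are correct exactly for $n\geq 4$, so the cases $n\leq 3$ must be handled separately in your bijection.
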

 \begin{proof} We have to consider two cases. If  $\alpha_{m-1}=1$, 
 since  $$K[\alpha_0+1,\alpha_1,\ldots, \alpha_{m-2}, 1]= K[\alpha_0+1,\alpha_1,\ldots, \alpha_{m-2} +1],$$ one has
 \begin{equation} \label{eq:prima00}
 K[\alpha_0+1,\alpha_1,\ldots, \alpha_{m-2} +1] \leq F_{n-\alpha_m} \leq F_{n-1}.
 \end{equation}
 Indeed,  since  $\sum_{i=0}^{m-2} \alpha_i= n-1-\alpha_m$, the preceding formula follows from
 Theorem \ref{thm:cf0101}. 
 If $\alpha_{m-1}>1$, one derives:
 \begin{equation}\label{eq:seconda00}
 K[\alpha_0+1,\alpha_1,\ldots, \alpha_{m-2}, (\alpha_{m-1}-1)+1] \leq F_{n-\alpha_m} \leq F_{n-1}.
 \end{equation}
 Indeed, since $\sum_{i=0}^{m-2} \alpha_i+ \alpha_{m-1}-1=n-1-\alpha_m$, the previous inequality follows again from  Theorem \ref{thm:cf0101}. Thus in any case (\ref{eq:ultimo}) is satisfied.
 
 The maximal value of  $K[\alpha_0+1,\alpha_1,\ldots, \alpha_{m-1}]$ is then $F_{n-1}$.
 It is reached if and only if  one of the conditions  $1), 2), 3),$ and $4)$ is satisfied. The sufficiency of
 the preceding conditions is readily verified. Let us prove the necessity.
 
 Indeed, necessarily $\alpha_m=1$. Moreover, by Theorem \ref{thm:cf0101}, if $K[\alpha_0+1,\alpha_1,\ldots, \alpha_{m-1}]=F_{n-1}$,
  then   $\alpha_0=0$ or $\alpha_0=1$. We  consider  only  the case $\alpha_0=0$; the case
  $\alpha_0=1$ is similarly dealt with. 
  
  If  $\alpha_{m-1}=1$, in view of (\ref{eq:prima00}), one derives by Theorem \ref{thm:cf0101}, that $m=n$ and  $\alpha_1=\alpha_2= \cdots =\alpha_{n-2}=1$. Since $\alpha_{m-1}=\alpha_{n-1}=\alpha_m=\alpha_n=1$,   condition 1)  is satisfied.
  
   If  $\alpha_{m-1}>1$, in view of (\ref{eq:seconda00}), one derives by Theorem \ref{thm:cf0101}, that
   $m=n-1$ and $\alpha_1=\alpha_2= \cdots =\alpha_{m-2}= \alpha_{m-1}-1=1$. Thus $\alpha_{m-1}=\alpha_{n-2}=2$.  Hence, since $\alpha_m = \alpha_{n-1}=1$,  condition 2)  is satisfied.
 \end{proof}
By Propositions \ref{prop:ultima0} and \ref{prop:ultima1}, one easily derives  Theorem \ref{thm:pmain} of the previous section.

\end{document}